\def\eqref#1{equation~\ref{#1}}
\def\1{\bm{1}}
\DeclareMathAlphabet{\mathsfit}{\encodingdefault}{\sfdefault}{m}{sl}
\SetMathAlphabet{\mathsfit}{bold}{\encodingdefault}{\sfdefault}{bx}{n}
\DeclareMathOperator*{\argmin}{arg\,min}
\newtheorem{definition}{Definition}
\newtheorem{theorem}{Theorem}
\newtheorem{lemma}{Lemma}
\newtheorem{proposition}{Proposition}
\title{COMPASS: Robust Feature Conformal Prediction for Medical Segmentation Metrics}
\author{Matt Y. Cheung, Ashok Veeraraghavan \& Guha Balakrishnan 
\\
Department of Electrical \& Computer Engineering\\
Rice University \\
Houston, TX 77005, USA \\
}
\begin{document}

\maketitle
\begin{abstract}
In clinical applications, the utility of segmentation models is often based on the accuracy of derived downstream metrics such as organ size, rather than by the pixel-level accuracy of the segmentation masks themselves. Thus, uncertainty quantification for such metrics is crucial for decision-making. Conformal prediction (CP) is a popular framework to derive such principled uncertainty guarantees, but applying CP naively to the final scalar metric is inefficient because it treats the complex, non-linear segmentation-to-metric pipeline as a black box. We introduce COMPASS, a practical framework that generates efficient, metric-based CP intervals for image segmentation models by leveraging the inductive biases of their underlying deep neural networks. COMPASS performs calibration directly in the model's representation space by perturbing intermediate features along low-dimensional subspaces maximally sensitive to the target metric. We prove that COMPASS achieves valid marginal coverage under the assumption of exchangeability. Empirically, we demonstrate that COMPASS produces significantly tighter intervals than traditional CP baselines on four medical image segmentation tasks for area estimation of skin lesions and anatomical structures. Furthermore, we show that leveraging learned internal features to estimate importance weights allows COMPASS to also recover target coverage under covariate shifts. COMPASS paves the way for practical, metric-based uncertainty quantification for medical image segmentation.
\end{abstract}

\section{Introduction}
Uncertainty quantification is of critical need in medical image analysis, a field used for decision support in high-stakes clinical diagnosis and treatment planning applications~\citep{begoli2019need,abdar2021review}. A fundamental task in medical image analysis is image segmentation, the task of separating anatomical structures and lesions from each other within an image. Deep learning models, particularly U-Net variants~\citep{ronneberger2015u,isensee2021nnu}, have achieved state-of-the-art performance in medical image segmentation. In practice, the outputs of these models (``segmentation maps'') are often treated as an intermediate result that are then used to automatically derive downstream metrics of interest (known as ``radiomics''), such as the areas/volumes or texture patterns of specific anatomic regions (Figure~\ref{fig:overview}, left). These derived metrics are then used for decision support to guide clinicians in diagnosis and treatment. 

Conformal prediction (CP) has emerged as a popular, statistically principled uncertainty quantification framework of choice in machine learning, providing guarantees without restrictive distributional assumptions~\citep{vovk2005algorithmic,shafer2008tutorial,fontana2023conformal,papadopoulos2002inductive,angelopoulos2021gentle}. 
While well-studied in the context of typical prediction tasks involving scalar output variables, CP is less explored for tasks such as medical image segmentation, in which the output variables are images. Existing CP methods for segmentation typically focus on deriving bounds for \emph{pixel-level errors} ~\citep{mossina2024conformal,mossina2025conformal,brunekreef2024kandinsky,angelopoulos2022conformal}, which, while useful for understanding variations of local segmentation contours, may yield meaningless or misaligned intervals for downstream derived metrics. On the other hand, a recent study shows that treating the segmentation-to-metric pipeline as a black box and performing CP directly on the output metric space yields intervals that are well-aligned to the metrics (by construction), but are also often inefficient (i.e., large) because the internal biases of the pipeline are not exploited in the vanilla CP formulation~\citep{cheung2025metric}. 

To achieve more efficient intervals, a promising direction is to leverage the powerful inductive biases of neural networks by performing CP on their intermediate representations. Feature Conformal Prediction (FCP)~\citep{teng2022predictive,tang2024predictive,chen2024conformalized} shows that by working in a semantic feature space, it is possible to generate provably tighter prediction intervals. However, the FCP algorithm requires solving a complex optimization to find the closest adversarial feature vector for each data point, which is computationally prohibitive for high-dimensional feature spaces of typical CNN and transformer architectures which are the workhorses of modern medical image segmentation.

To bridge this gap, we introduce \textbf{Conformal Metric Perturbation Along Sensitive Subspaces (COMPASS)}, a framework to perform feature CP in a tractable manner to generate valid and efficient prediction intervals for any (differentiable) metric derived from the output of a neural network. The core concept behind COMPASS is to linearly perturb the features outputted by the network at a particular layer along data-specific directions that are highly sensitive to the metric of interest (Figure~\ref{fig:overview}, center). To make this process tractable for typical neural network layers with a large number of features, we propose computing a low-dimensional manifold of any given layer by applying principal component analysis (PCA) on the gradients of the output metric with respect to each of the layer's features. This manifold represents the principal directions of sensitivity of the output with respect to that layer. 
We prove that linear perturbations in the latent space achieve marginal coverage in the output metric space under exchangeability, i.e., for a fresh test point, the ground truth will be contained within the prediction intervals generated by linear feature perturbations. Furthermore, we show that a simple weighted variant of COMPASS may be used to correct for covariate shifts.  

We evaluated COMPASS on public datasets for four medical image segmentation tasks: colorectal cancer in histopathology images (EBHI) ~\citep{hu2023ebhi}, skin lesion segmentation (HAM10000)~\citep{tschandl2018ham10000}, thyroid nodule segmentation in ultrasound images (TN3K)~\citep{gong2021multi}, and polyp segmentation in endoscopic images (Kvasir)~\citep{jha2019kvasir}. Results show that while standard CP methods achieve valid coverage, they often produce unnecessarily wide prediction intervals. On the other hand, COMPASS finds semantically meaningful directions in the latent space that correspond to monotonic metric changes, resulting in efficient intervals. Furthermore, the weighted extension of COMPASS recovers target coverage under covariate shifts and is the most efficient across all weighted baseline methods. COMPASS paves the way for practical, metric-based uncertainty quantification for medical image segmentation.

\begin{figure}[t!]
  \centering
  \includegraphics[width=\textwidth]{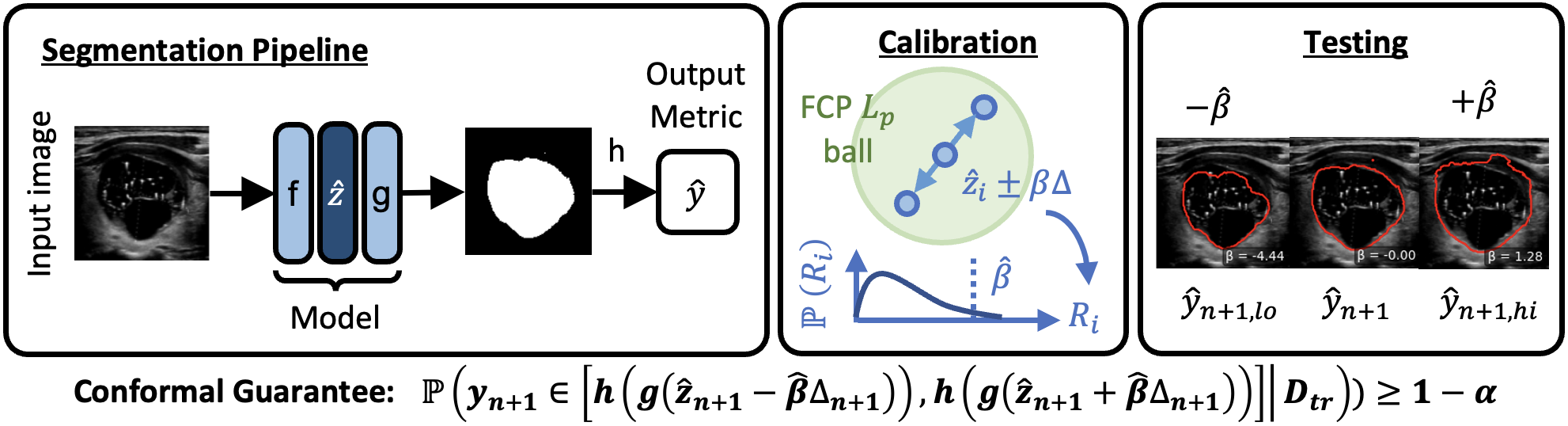}
  \caption{\textbf{Overview of COMPASS.} (Left) A medical image segmentation network predicts a segmentation map'' from an input image. We conceptually decompose this network into a function $f$ which maps the image to latent features $\hat{z}$, and a function $g$ that maps $\hat{z}$ to the output map. The map may then be used to compute a (differentiable) downstream metric $\hat{y}$ via the function $h$. (Center) We linearly perturb calibration features $\hat{z_i}$ in a sample-specific direction $\Delta_i$ to find the scores $R_i$. The scores are used to find the conformal quantile $\hat{\beta}$. (Right) At test time for subject $n+1$, we perturb the features $\hat{z}_{n+1}$ in the direction $\Delta_{n+1}$ with magnitude $\hat{\beta}$. By Theorem 1, our interval construction is guaranteed to be nested. Therefore, under the assumption of exchangeability, the resulting prediction interval achieves marginal coverage (bottom).}
  \label{fig:overview}
\end{figure}

\section{Method}
\subsection{Theoretical Coverage under Linear Latent Perturbations}
\label{sec:theory}

To provide rigorous uncertainty quantification for the metric $Y$, we consider \emph{linear perturbations in latent space} along a direction $\Delta \in \mathcal{Z}$ and define intervals in the metric space. We first formalize the nestedness condition, which is a fundamental requirement for the validity of any conformal procedure~\citep{vovk2005algorithmic, shafer2008tutorial}.
\begin{definition}[Nestedness]
\label{def:nestedness}
Let $S_\beta(x)$ be a family of prediction sets for an input $x \in \mathcal{X}$, parameterized by $\beta \ge 0$. The family $\{S_\beta(x)\}_{\beta \ge 0}$ is said to be \emph{nested} for $x$ if
$
\beta_1 \le \beta_2 \;\Rightarrow\; S_{\beta_1}(x) \subseteq S_{\beta_2}(x)$.
\end{definition}
This condition guarantees that a larger perturbation magnitude yields a larger (or equal-sized) prediction set, which is necessary for the quantile-based coverage proof. While standard CP methods satisfy this trivially, it becomes a non-trivial condition for deep feature spaces. We now present our main theorem. Crucially, we construct our prediction sets $S_\beta(x)$ in a way that \emph{guarantees nestedness by definition}, thereby ensuring the validity of the conformal procedure.
\begin{theorem}[Split-Conformal Coverage under Linear Latent Perturbations]
\label{thm:split_conformal}
Let $(X_i,Y_i)_{i\ge 1}$ be exchangeable random pairs with $X_i\in\mathcal X$ and $Y_i\in\mathbb R$, and split the data into a training set $D_{\mathrm{tr}}$ and a calibration set $D_{\mathrm{cal}} = \{(X_i,Y_i)\}_{i=1}^n$.
Using $D_{\mathrm{tr}}$, fit a segmentation model with decoder $g:\mathcal Z \to \mathcal S$, and let $\hat z(x) \in \mathcal Z$ denote the latent vector computed deterministically by the trained encoder for input $x$.
Let $h:\mathcal S \to \mathbb R$ be a measurable metric, and let $\Delta \in \mathcal Z$ be any measurable direction that depends only on $D_{\mathrm{tr}}$.

For $x \in \mathcal X$ and $\beta \ge 0$, define the metric function along this direction as:
\begin{equation} \label{eq:metric_func}
m_x(b) := (h \circ g)(\hat z(x) + b \Delta), \quad \text{for } b \in \mathbb{R}.
\end{equation}
We define the prediction set $S_\beta(x)$ as the range of the metric function over the perturbation interval $[-\beta, +\beta]$:
\begin{equation} \label{eq:envelope_set}
S_\beta(x) := \left[ \min_{b \in [-\beta, +\beta]} m_x(b), \max_{b \in [-\beta, +\beta]} m_x(b) \right].
\end{equation}
By construction, this guarantees that the family $\{S_\beta(x)\}_{\beta\ge 0}$ is nested (Definition~\ref{def:nestedness}).
For each calibration pair $(X_i,Y_i)$, define the non-conformity score:
\begin{equation} \label{eq:envelope_score}
R_i := \inf\{\beta \ge 0: Y_i \in S_\beta(X_i)\} \in [0,\infty],
\end{equation}
and let $\hat \beta$ be the $\lceil (1-\alpha)(n+1) \rceil$-th smallest value among $\{R_1,\dots,R_n\}$, where $\alpha \in (0,1)$ is the user-specified mis-coverage level.

Then, for a fresh test pair $(X_{n+1},Y_{n+1})$, the prediction set $S_{\hat\beta}(X_{n+1})$ satisfies
\begin{equation}
\label{eq:guarantee}
\mathbb P\bigl(Y_{n+1} \in S_{\hat\beta}(X_{n+1}) \;\big|\; D_{\mathrm{tr}}\bigr) \ge 1-\alpha.
\end{equation}
\noindent\textit{Proof sketch.} Exchangeability of $(X_i,Y_i)$ ensures that the rank of the test score $R_{n+1}$ among $\{R_1,\dots,R_n,R_{n+1}\}$ is uniformly distributed. By the construction of $S_\beta(x)$ as the range $[\min(\cdot), \max(\cdot)]$, the nestedness condition is satisfied by definition. Therefore, by construction of $\hat\beta$, the standard conformal guarantee holds. See Appendix~\ref{app:split_conformal_proof} for full proof.
\end{theorem}

\paragraph{Intuition.}
Our prediction set $S_\beta(x)$ is defined as the metric \emph{range} over the perturbation interval, which satisfies the nestedness condition required for CP. 
Our construction forms a \emph{conservative envelope} that explicitly accounts for any non-monotonic metric behavior within the perturbation interval. 
Our non-conformity score $R_i$ is then the minimal perturbation magnitude $\beta$ required for this envelope to contain the ground truth metric $Y_i$.

\subsection{COMPASS: Conformal Metric Perturbation Along Sensitive Subspaces}
The proposed framework COMPASS is built on the insight that we can directly calibrate a downstream metric by perturbing a model's representations along specific, data-dependent directions, $\Delta$. As our theoretical guarantee in Theorem~\ref{thm:split_conformal} applies to any measurable perturbation direction, the key to an efficient method lies in choosing a direction that is highly sensitive to the metric of interest. We explore two natural choices for this representation space for segmentation: the model's final output logits and a deeper internal feature layer.

The first and simplest approach (\textbf{COMPASS-L}) operates on the model's final logits. A uniform, scalar shift applied to this tensor directly modulates the model's overall confidence before the final activation function. This is equivalent to defining the sensitive direction, $\Delta_i$, as a tensor of ones. A richer alternative is to perturb an internal feature representation $\hat{z}$, as done in FCP~\citep{teng2022predictive}. However, a naive search for an optimal perturbation in high-dimensional spaces is computationally intractable. 
Furthermore, arbitrary perturbation directions may cause the metric to change in an erratic or non-smooth ways, resulting in unnecessarily wide and inefficient intervals.

To overcome this, we propose a data-driven method (\textbf{COMPASS-J}) to identify a low-dimensional sensitive subspace that is globally effective\footnote{Our use of principal directions to restrict the perturbation search space is conceptually related to~\cite{belhasin2023principal}, where authors employed principal directions to construct prediction sets for inverse problems.}. Given a metric function in Equation~\ref{eq:metric_func} that is differentiable, we first compute Jacobians of the metric $\hat{y}$ with respect to $\hat{z}$ to provide a local, linear map of the metric's sensitivity with respect to the features for each training sample:
\[
J_i := \frac{d\,h(g(\hat{z}_i))}{d\hat{z}_i} \in \mathbb{R}^{C \times D_1 \times \dots\times D_3},
\]
where $\hat{z}_i=f(x_i)$, $C$ is the number of channels, and $D_i$ is the $i$-th spatial dimension. Because the full spatial Jacobian is often too high-dimensional to be practical, we \emph{sum} the spatial dimensions and apply Principal Component Analysis (PCA) to the set of these vectors from the training set, $\{\mathcal{J}_i\}_{i \in D_{\mathrm{tr}}}$, and select the matrix of the top $L$ eigenvectors $V_L \in \mathbb{R}^{C \times L}$.

For any given sample, we find its sensitive direction $\mathbf{d}_i$ by projecting its sensitivity vector onto this learned subspace, and normalization it to produce direction vector $\Delta_i$:
\[
\mathbf{d}_i = V_L V_L^T \mathcal{J}_i, \quad \Delta_i = \mathbf{d}_i / \|\mathbf{d}_i\|_2.
\]

\subsection{COMPASS Calibration and Inference}

The goal of COMPASS is to find the smallest symmetric perturbation magnitude $\beta$ such that the entire interval $S_\beta(x)$ contains the ground truth value $y$. 
In general, computing $S_\beta(x)$ requires finding the minimum and maximum of the perturbed metric response $m_{x_i}(b), b \in [-\beta, \beta]$.
When the metric response is \emph{non-monotonic}, computing the extrema requires a full discretized sweep across the perturbation range. 
Performing this sweep at every step during calibration makes the envelope computation expensive because each candidate $\beta$ requires many forward passes to map out the complete response curve for each sample.

However, when $m_{x_i}(b)$ is \emph{monotonic} in $b$, the conservative envelope collapses to evaluations at the \emph{endpoints}: $S_\beta(x_i) = \big[m_{x_i}(-\beta),\; m_{x_i}(+\beta)\big]$.
Thus, the interval can be computed with only two forward passes, giving a practical and efficient implementation of COMPASS. 
A perturbation sweep is still required, but only once per sample, solely to verify monotonicity rather than to repeatedly compute the envelope during calibration.
As we will see in Figures~\ref{fig:linearperturb} and~\ref{fig:monotonic}, this monotonicity condition holds across all of our experiments.

Once $S_\beta(x_i)$ can be evaluated, either via the conservative full sweep or the endpoint method, the non-conformity scores $\{R_i\}_{i=1}^n$ follow directly from Equation~\ref{eq:envelope_score}. 
The conformalized quantile is:
\[
\hat{\beta} = Q\!\left(\{R_i\}_{i=1}^n, \frac{\lceil(1-\alpha)(n+1)\rceil}{n}\right),
\]
and at test time we return the interval $S_{\hat{\beta}}(x_{n+1})$.

For a full outline of the algorithms, refer to Appendix~\ref{app:COMPASS-algo}. Furthermore, while symmetric perturbations are effective when the metric responds similarly to positive and negative perturbations, we often observe an asymmetric relationship when the metric responds differently in positive and negative directions. 
In such cases, an asymmetric calibration, where we find separate non-conformity scores for the upper and lower bounds, is necessary to construct a more adaptive and efficient interval. 
We discuss this asymmetric version of COMPASS and provide an equivalent algorithm in Appendix~\ref{app:asymmetric_theory}.

\subsection{Weighted COMPASS for Distribution Shifts}
While Theorem~\ref{thm:split_conformal} provides a robust guarantee of marginal coverage, its validity rests on the critical assumption that the calibration and test data are exchangeable. In many real-world applications, such as medical image segmentation, this assumption is frequently violated due to the variations in how the data is collected, processed, and interpreted. Under such distribution shifts, the unweighted quantile $\hat{\beta}$ is no longer guaranteed to provide the target coverage level on the test set, leading to systematic undercoverage and unreliable prediction intervals.

To address this limitation, we employ Weighted Conformal Prediction (WCP) \citep{tibshirani2019conformal,barber2023conformal} to restore the coverage guarantee by re-weighting the calibration samples. The weight for a calibration sample $X_i$ is ideally the density ratio $w(X_i) = p_{\mathrm{test}}(X_i) / p_{\mathrm{cal}}(X_i)$. In practice, this ratio is unknown and typically estimated by training an auxiliary classifier $\mathcal{A}:\mathcal{X} \to [0,1]$ to distinguish between samples from the calibration and test sets. The predicted probability $\hat{p}(x) = \mathcal{A}(x)$ that a sample $x$ belongs to the test set is then used to compute the weights, effectively adjusting the calibration procedure to account for the distribution shift. We now extend Theorem~\ref{thm:split_conformal} to the weighted setting, providing a formal coverage guarantee under distribution shift. The setup remains identical.

\begin{proposition}[Validity of Weighted COMPASS under Covariate Shift]
\label{prop:wcp_compass}
Let $D_{\mathrm{cal}} = \{(X_i,Y_i)\}_{i=1}^n$ be $n$ exchangeable pairs drawn from a distribution $P_{\mathrm{cal}}$, and let $(X_{n+1}, Y_{n+1})$ be a fresh test pair from a potentially different distribution $P_{\mathrm{test}}$. Let the non-conformity scores $R_i = R(X_i, Y_i)$ be computed as described in Theorem~\ref{thm:split_conformal}.
Let the weights $w_i = w(X_i)$ be the true density ratio $p_{\mathrm{test}}(X_i) / p_{\mathrm{cal}}(X_i)$. Let $\hat\beta_{\mathrm{w}}$ be the weighted $(1-\alpha)$-quantile of the calibration scores $\{R_1, \dots, R_n\}$ with corresponding weights $\{w_1, \dots, w_n\}$, defined as
\[
\hat\beta_{\mathrm{w}} := \inf\left\{\beta \ge 0: \frac{\sum_{i=1}^n w_i \mathbf{1}\{R_i \le \beta\}}{\sum_{j=1}^n w_j} \ge 1-\alpha\right\}.
\]
Then, for the fresh test pair $(X_{n+1},Y_{n+1}) \sim P_{\mathrm{test}}$, the prediction set $S_{\hat\beta_{\mathrm{w}}}(X_{n+1})$ satisfies
\[
\mathbb P\bigl(Y_{n+1} \in S_{\hat\beta_{\mathrm{w}}}(X_{n+1}) \;\big|\; D_{\mathrm{tr}}, D_{\mathrm{cal}}\bigr) \ge 1-\alpha.
\]
\end{proposition}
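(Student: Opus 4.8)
\noindent\textit{Proof strategy.} The plan is to reuse the argument behind Theorem~\ref{thm:split_conformal} almost verbatim, replacing its ``uniform rank'' step by the weighted‑rank argument of weighted conformal prediction~\cite{tibshirani2019conformal,barber2023conformal}. Throughout I would condition on $D_{\mathrm{tr}}$ only, so that the encoder $\hat z(\cdot)$, the direction $\Delta$, the maps $x\mapsto m_x(\cdot)$, and the score map $(x,y)\mapsto R(x,y)$ are all fixed; the probability in the claim is then over the joint draw of $D_{\mathrm{cal}}$ and $(X_{n+1},Y_{n+1})$ (the conditioning on $D_{\mathrm{cal}}$ as written is to be read as in Theorem~\ref{thm:split_conformal}, i.e.\ the bound is marginal over the calibration draw, since a single fixed realization of $D_{\mathrm{cal}}$ cannot certify a deterministic $1-\alpha$ guarantee). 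The first step is the reduction already used in Theorem~\ref{thm:split_conformal}: by Definition~\ref{def:nestedness} together with continuity of $h\circ g$ (which makes each $S_\beta(x)$ a closed interval whose endpoints vary continuously in $\beta$), one has $y\in S_\beta(x)\iff R(x,y)\le\beta$ for all $(x,y)$ and $\beta\ge 0$, and in particular $\{R_{n+1}\le\hat\beta_{\mathrm w}\}\subseteq\{Y_{n+1}\in S_{\hat\beta_{\mathrm w}}(X_{n+1})\}$; hence it suffices to prove $\mathbb P(R_{n+1}\le\hat\beta_{\mathrm w})\ge 1-\alpha$, the boundary case $R_{n+1}=\hat\beta_{\mathrm w}$ being handled by closedness of $S_{\hat\beta_{\mathrm w}}(X_{n+1})$.

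The second step establishes weighted exchangeability of the scores. Writing $Z_i=(X_i,Y_i)$ and using that covariate shift means $P_{\mathrm{test}}(dx,dy)=w(x)\,P_{\mathrm{cal}}(dx,dy)$ with $w(x)=p_{\mathrm{test}}(x)/p_{\mathrm{cal}}(x)$, the joint density of $Z_1,\dots,Z_n\stackrel{\mathrm{iid}}{\sim}P_{\mathrm{cal}}$ and $Z_{n+1}\sim P_{\mathrm{test}}$ factors as $w(x_{n+1})\prod_{i=1}^{n+1}p_{\mathrm{cal}}(z_i)$ up to a permutation‑symmetric factor, so $Z_1,\dots,Z_{n+1}$ are weighted exchangeable with weight functions $(1,\dots,1,w)$. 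Since $R_i=R(Z_i)$ is a fixed measurable function and $w_i=w(X_i)$ depends on $Z_i$ only through $X_i$, the pairs $(R_i,w_i)$ inherit weighted exchangeability, and the weighted‑exchangeability lemma of Tibshirani et al.~\cite{tibshirani2019conformal} (see also~\cite{barber2023conformal}) gives: conditionally on the unordered multiset $\{(R_i,w_i)\}_{i=1}^{n+1}=\{(r_k,v_k)\}_{k=1}^{n+1}$,
\[
R_{n+1}\ \sim\ \sum_{k=1}^{n+1}\frac{v_k}{\sum_{j=1}^{n+1}v_j}\,\delta_{r_k}.
\]

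The third step is the weighted‑quantile argument, structurally identical to the quantile step of Theorem~\ref{thm:split_conformal} but with weighted ranks in place of uniform ranks. Conditionally on the multiset, $\mathbb P(R_{n+1}\le t)$ is the self‑normalized weighted empirical CDF displayed above, which first reaches $1-\alpha$ at the weighted $(1-\alpha)$‑quantile of $\sum_{i=1}^n\tilde p_i\,\delta_{R_i}+\tilde p_{n+1}\,\delta_{+\infty}$, with $\tilde p_i=w(X_i)\big/\bigl(\sum_{j=1}^n w(X_j)+w(X_{n+1})\bigr)$ and the mass at $+\infty$ carrying the test point's own normalized weight; setting $\hat\beta_{\mathrm w}$ equal to this quantile and then taking expectations over the multiset yields $\mathbb P(R_{n+1}\le\hat\beta_{\mathrm w})\ge 1-\alpha$, which combined with the first step proves the claim. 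I expect the main obstacle to be exactly this quantile bookkeeping rather than any new analytic content: the classical subtlety of weighted CP is that the (unknown) test‑point weight $w(X_{n+1})$ must enter the normalization — equivalently, one must include the $\delta_{+\infty}$ point mass — for the finite‑sample $1-\alpha$ bound to hold exactly, so the expression for $\hat\beta_{\mathrm w}$ in the statement (normalized by $\sum_{j=1}^n w_j$ alone) should be understood as shorthand for this $+\infty$‑augmented quantile; using the literal normalization only shrinks the threshold and so does not by itself certify the stated coverage, whereas the augmented version reproduces the Theorem~\ref{thm:split_conformal} guarantee line for line.
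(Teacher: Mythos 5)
Your proof takes the same high-level route as the paper's appendix sketch — reduce via nestedness to the event $\{R_{n+1}\le\hat\beta_{\mathrm w}\}$ and then invoke weighted conformal prediction~\cite{tibshirani2019conformal} — but where the paper contents itself with a one-paragraph appeal ("this is a direct application of WCP\dots the coverage guarantee directly applies"), you actually execute the weighted-exchangeability reduction and the weighted-quantile step, and in doing so you have caught two genuine issues that the paper's sketch glosses over. First, the $\hat\beta_{\mathrm w}$ displayed in the proposition normalizes by $\sum_{j=1}^n w_j$ alone, whereas the finite-sample WCP guarantee requires the test point's own weight $w(X_{n+1})$ in the normalization, i.e.\ the $\delta_{+\infty}$ mass carrying weight $w(X_{n+1})\big/\bigl(\sum_{j=1}^n w_j + w(X_{n+1})\bigr)$, which in particular makes the calibrated threshold depend on $X_{n+1}$. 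You are correct that dropping $w(X_{n+1})$ from the denominator only makes the ratio larger, so the literal $\hat\beta_{\mathrm w}$ is no larger than the $+\infty$-augmented quantile and the resulting set could undercover; the $1-\alpha$ bound is certified for the augmented quantile, not the one as written. Second, you are also right that the conditioning on $D_{\mathrm{cal}}$ in the displayed guarantee cannot hold as stated — a deterministic $1-\alpha$ bound for every fixed calibration realization is impossible — and should be read marginally over the calibration draw, exactly as in Theorem~\ref{thm:split_conformal}. Subject to those two corrections to the statement, your chain (nestedness $\Rightarrow$ $y\in S_\beta(x)\iff R(x,y)\le\beta$; covariate shift $\Rightarrow$ weighted exchangeability of $(R_i,w_i)_{i=1}^{n+1}$; weighted-quantile bound $\Rightarrow$ $\mathbb P(R_{n+1}\le\hat\beta_{\mathrm w})\ge 1-\alpha$) is the standard, correct instantiation of WCP here and is strictly more rigorous than what the paper provides.
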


\begin{proof}[Proof Sketch]
This is a direct application of WCP~\citep{tibshirani2019conformal}. 
The validity of this framework applies to any valid non-conformity score. 
Our scores are valid as they are from a deterministic function of the samples and the pre-trained model. When the weights represent the true density ratio, the weighted empirical distribution of the calibration scores is an unbiased estimator of the test score distribution.
Thus, the coverage guarantee from the original theorem directly applies.
\end{proof}

In practice, the true oracle is not available, and the coverage guarantee holds \emph{approximately}, with the quality of the approximation depending on the accuracy of the density ratio estimates. To correct for the induced distribution shift, we explore three correction strategies: 
1) Class: available ground truth class labels as features and weights are computed directly from the known class prevalences, 
2) Latent: Model's latent representations summed on the spatial dimension as features~\citep{lambert2024robust}, and 
3) Jacobian: model's internal geometric sensitivity (jacobians) summed as features.

\section{Experiments}
We evaluated COMPASS across four medical image segmentation tasks: 1) segmentation on H\&E histopathology images from the EBHI dataset (H\&E)~\citep{hu2023ebhi}, 2) skin lesion segmentation on dermoscopic images from the HAM10000 dataset (Skin Lesion)~\citep{tschandl2018ham10000}, 3) thyroid nodule segmentation from the TN3K dataset (Nodule)~\citep{gong2021multi}, and 4) gastrointestinal polyp segmentation (PolyP)~\citep{jha2019kvasir}.
We trained all models using the standard U-Net architecture from MONAI~\citep{cardoso2022monai,kerfoot2018left}. 
We focused on segmented object size (area) as the downstream clinical metric of interest.
Object size is among the most widely adopted quantitative biomarkers across diverse clinical applications~\citep{smith2003biomarkers,o2008quantitative}.
For COMPASS-J, which requires a differentiable metric function, we compute area by applying a soft thresholding (a sigmoid function) to the output logits and then summing the resulting probability map. 
We repeated each experiment over 100 randomized splits to compute average coverage and interval sizes. We provide details on architecture, preprocessing, training, calibration, and testing details in Appendix~\ref{app:expdetails}. We considered three types of calibration strategies:

\textbf{1. Output-space calibration:} These include split conformal prediction (SCP)~\citep{lei2018distribution}, Conformalized Quantile Regression (Output-CQR)~\citep{romano2019conformalized}, and locally adaptive conformal prediction (Local CP)~\citep{lei2018distribution,papadopoulos2008normalized,papadopoulos2011regression}.

\textbf{2. End-to-end calibration:} We use Conformalized Quantile Regression (E2E-CQR)~\citep{lambert2024robust} on a model trained to directly produce pixel-wise lower and upper bounds using the Tversky loss~\citep{salehi2017tversky}.

\textbf{3. Feature-space calibration:} These include COMPASS-L, which directly calibrates the model's final pre-activation outputs (logits), and COMPASS-J, which constructs prediction intervals by perturbing latent features along dominant directions. 
As the strong monotonicity is empirically validated in Figures~\ref{fig:linearperturb} and~\ref{fig:monotonic}, all our experiments utilize the practical endpoint algorithm. 
We omitted the original FCP method~\citep{teng2022predictive} from our main empirical comparison because we found its core adversarial search procedure for finding non-conformity scores to be computationally intractable for large feature spaces and to fail to reliably converge, preventing the computation of defined scores.
We instead offer a comparison against a conceptual oracle benchmark for FCP in Table~\ref{tab:fcpcompass}: because FCP prediction set is defined by an $L_p$ ball in the latent space~\citep{teng2022predictive} with bounds computed using Linear Relaxation based Perturbation Analysis (LiRPA)~\citep{xu2020automatic}, we empirically find the minimal radius of the $L_p$ ball, which, when propagated through the decoding function using LiRPA, achieves the target coverage for the final metric.

\begin{table}[b!]
\centering
\caption{\textbf{COMPASS achieves more efficient interval sizes compared to baseline methods across different target coverages.} 
For 4 datasets and 100 random splits, we compare interval lengths at $\alpha=\{0.15,0.1,0.05\}$. 
We show the output space, end-to-end, and feature calibration methods in red, blue, and green. 
The shortest mean interval lengths are bolded. 
For empirical coverages, see Table~\ref{tab:unet_coverages}.}
\label{tab:unet_intervals}
\resizebox{\columnwidth}{!}{%
\begin{tabular}{@{}cccccccc@{}}
\cmidrule(l){3-8}
 &  & \multicolumn{6}{c}{\textbf{Interval Size (Pixels$^2$, Mean$\pm$Standard Deviation)}} \\ 
\cmidrule(l){3-8} 
Dataset & $\alpha$ & COMPASS-J & COMPASS-L & E2E-CQR & Local & Output-CQR & SCP \\ 
\midrule
H\&E & 0.05 & 4637±630 & \textbf{4408±432} & 5121±651 & 6297±722 & 5646±358 & 5542±676 \\
H\&E & 0.10 & 3160±336 & \textbf{3139±375} & 3433±293 & 4223±558 & 3879±369 & 3509±333 \\ 
H\&E & 0.15 & \textbf{2320±252} & 2354±146 & 2679±199 & 3175±291 & 2819±207 & 2550±196 \\ [2mm]
Skin Lesion & 0.05 & \textbf{1657±80} & 1689±83 & 2569±195 & 3797±237 & 10857±65 & 3273±229 \\
Skin Lesion & 0.10 & \textbf{1179±53} & 1208±58 & 1351±75 & 2433±101 & 4581±36 & 1813±127 \\
Skin Lesion & 0.15 & \textbf{934±30} & 956±33 & 943±47 & 1865±50 & 2634±44 & 1124±77 \\ [2mm]
Nodule & 0.05 & \textbf{3257±210} & 3394±280 & 4150±265 & 3981±202 & 7481±46 & 4589±431 \\
Nodule & 0.10 & \textbf{2444±174} & 2510±180 & 2788±154 & 3311±133 & 5603±57 & 3076±200 \\
Nodule & 0.15 & \textbf{2016±143} & 2082±142 & 2150±164 & 2877±111 & 4032±64 & 2408±154 \\ [2mm]
PolyP & 0.05 & \textbf{5489±575} & 6376±769 & 9162±804 & 12394±2577 & 8163±722 & 8570±766 \\
PolyP & 0.10 & \textbf{4056±293} & 4397±469 & 6184±616 & 5965±1011 & 4981±675 & 6237±564 \\
PolyP & 0.15 & \textbf{3394±290} & 3686±361 & 4528±487 & 4463±481 & 3913±326 & 4504±366 \\ \bottomrule
\end{tabular}%
}
\end{table}

\subsection{Standard Conformal Prediction}
\label{ssec:ccp_setup}
First, we qualitatively find that COMPASS finds semantically meaningful directions in the latent space that correspond to monotonic changes in metric (Figure~\ref{fig:linearperturb}) for all applications. 
Next, we compare interval size across methods and datasets for $\alpha=\{0.15,0.1,0.15\}$ (Table \ref{tab:unet_intervals}). 
We find that output space methods (SCP, CQR, Local) generally achieve valid coverage (Table~\ref{tab:unet_coverages}) but yield longer intervals. 
End-to-end CQR  provides tighter intervals. COMPASS (-J and -L) achieves valid coverage with a significantly reduced interval size compared to E2E-CQR. Moreover, we observe an efficiency benefit of using deeper representations (COMPASS-J) compared to logits (COMPASS-L) for the majority of datasets. 
COMPASS-J generally achieves the shortest interval lengths overall.
We also find that COMPASS maintains similar calibration stability as baseline methods (Figure~\ref{fig:binned}).
This highlights the efficiency of COMPASS methods. 

\begin{figure}[t]
  \centering
  \includegraphics[width=\textwidth]{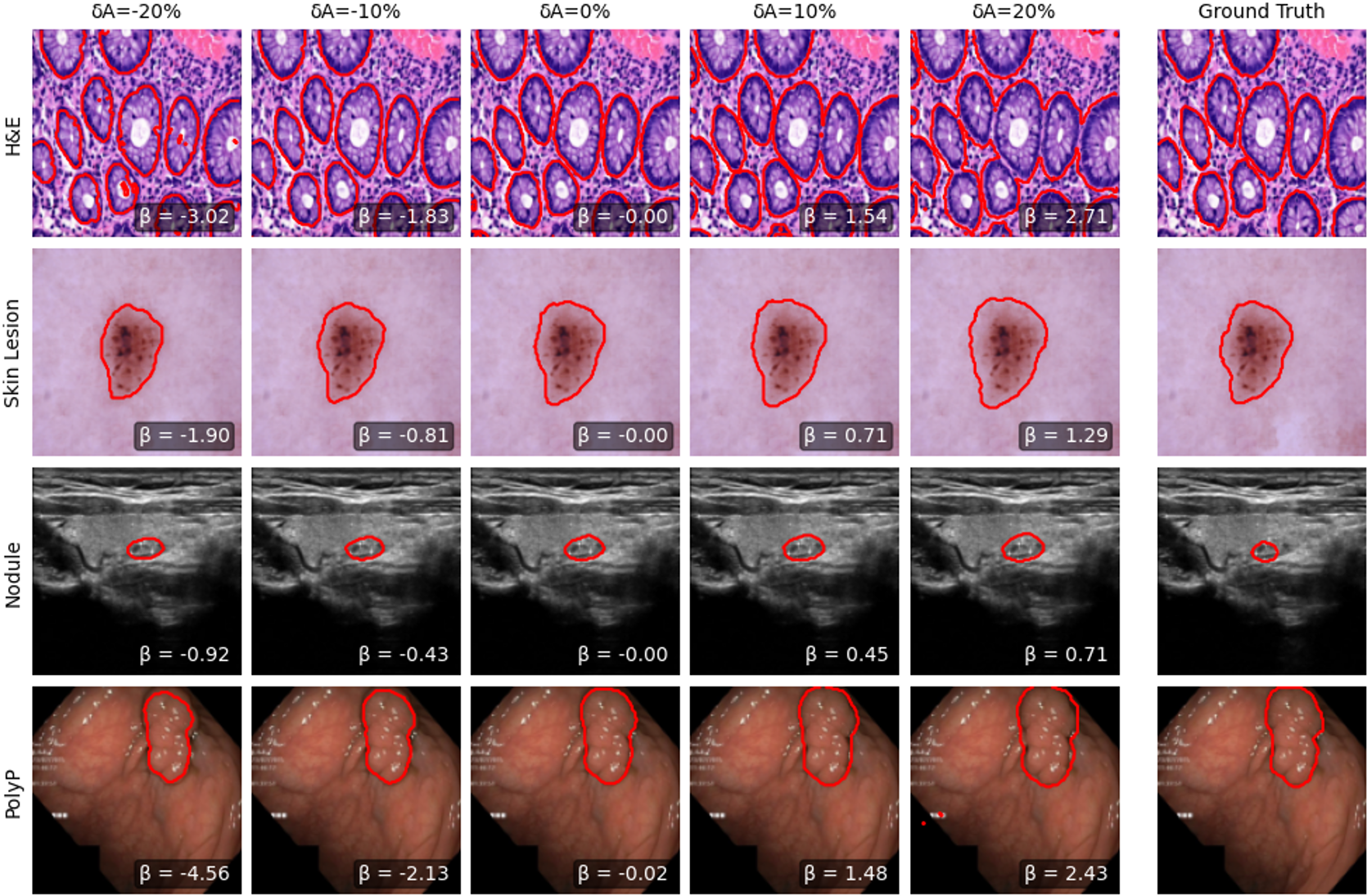}
  \caption{\textbf{Visual verification of monotonicity to justify Endpoint-COMPASS. } As latent features are shifted along the COMPASS-J direction $\Delta$, the induced segmentation volumes (red contours) monotonically expand. This is the key justification for using our efficient Endpoint-COMPASS in our experimental setup, as it demonstrates mathematical equivalence with the rigorous Envelope-COMPASS. We show a sample from each dataset with perturbation magnitudes $\beta$ targeted at -20\%, -10\%, 0\% (original prediction), +10\%, and +20\% change in area ($\delta A$). We provide a plot of all metric responses on the testing datasets in Figure~\ref{fig:monotonic} and more visual examples in Appendix~\ref{app:additionalfigures}.}
  \label{fig:linearperturb}
\end{figure}

\begin{figure}[t]
  \centering
  \includegraphics[width=\textwidth]{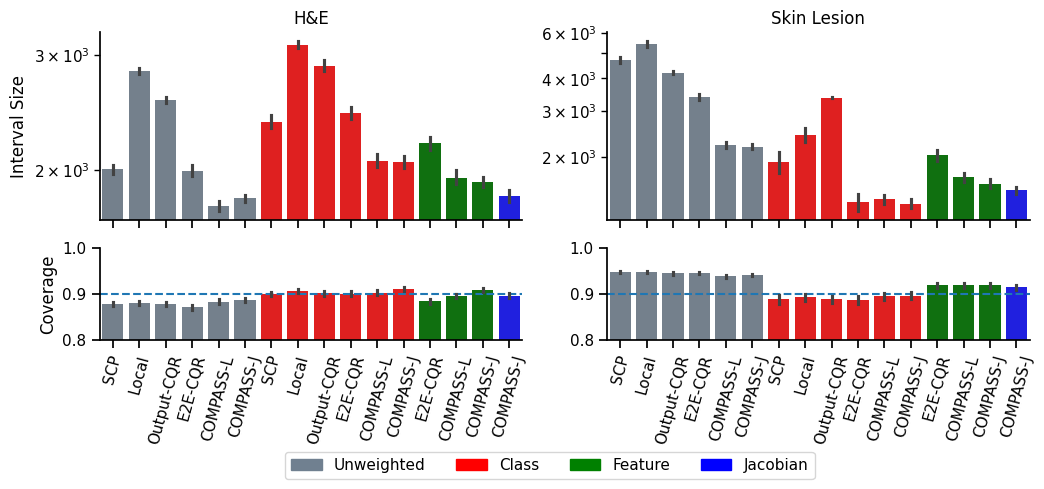}
  \caption{\textbf{COMPASS achieves the most efficient interval sizes under covariate shifts. } We show results for two datasets and compare weighting methods for 100 adversarial splits that maintain the same covariate shift for $\alpha=0.1$. For H\&E, we increased the proportion of ``hard'' samples in the test set. For Skin Lesion, we decreased the proportion of ``hard'' samples in the test set. We find that COMPASS methods achieve valid coverage and the most efficient intervals in each weighting method. We show the 95\% confidence intervals.}
  \label{fig:weighted_results}
\end{figure}

\subsection{Weighted Conformal Prediction}
\label{ssec:wcp_setup}

To empirically validate the effectiveness of WCP under distribution shift, we design an experiment with a controlled, adversarial label shift for the H\&E and Skin Lesion datasets. 
We adopted a dataset-specific partitioning strategy based on sample availability. 
For H\&E, we trained on a restricted subset to prevent data exhaustion, ensuring sufficient minority samples remained. 
Conversely, for Skin Lesion, we utilized the full dataset, inducing shift by systematically reallocating the class distributions.
For H\&E, we induced a shift from ``easy'' to ``hard''.
We allocated 40\% of the Adenocarcinoma samples to the calibration set and the remaining 60\% to the test set. 
This results in a test distribution dominated by difficult samples, leading to baseline undercoverage.
For Skin Lesion, we induced a shift from ``hard'' to ``easy''.
We allocated 30\% of the Melanocytic Nevi (majority/easy) samples to the calibration set, forcing the calibration set to be composed primarily of diverse, difficult lesions. 
The remaining 70\% of the Nevi were allocated to the test set.
This resulted in a calibration set that was significantly more difficult than the test set, leading to baseline overcoverage.
See Appendix~\ref{app:expdetails} for more details.

For output-space calibration methods, we used weights based on the ground truth class labels, which we consider as approximately ``oracle'' weights. Note that the theoretically perfect oracle weights are defined by the density ratio, which precisely corrects for the change in the full, high-dimensional distribution of features between the calibration and test sets. For end-to-end and feature space calibration, we used the latent features~\citep{lambert2024robust,woodland2023dimensionality,anthony2023use} and Jacobians to train auxiliary classifiers using gradient boosting machines~\citep{ke2017lightgbm} to distinguish between calibration and test sets. For end-to-end, we do not use Jacobian weights consistent with prior work~\citep{lambert2024robust}.

Our results (Figure~\ref{fig:weighted_results}) demonstrate that both the choice of weighting information and the \emph{choice of calibration layer} are critical for achieving robust coverage under these covariate shifts. 
Simple strategies, such as class-weighting, do not universally recover target coverage; unweighted methods (grey) fail on the H\&E dataset, while class-weighted methods (red) fail on Skin Lesion.
Furthermore, not all feature-based methods are robust. 
COMPASS-L and E2E-CQR with feature weighting still fail to maintain coverage under covariate shift on the H\&E dataset.
In contrast, the COMPASS-J variants with both feature-weighting and Jacobian-weighting were the only methods to consistently maintain the target coverage across both covariate shifts. 
Among the methods that proved empirically valid, the COMPASS-J variants were also the most efficient.
This suggests that 1) the model's deep features or their Jacobians provide a richer, more adaptive signal for difficulty than simpler class labels or logits, and 2) this deep-layer signal is essential, as methods relying on shallower layers (like COMPASS-L and E2E-CQR) were not robust even when using the same feature-weighting.
Our results are tabulated in Table~\ref{tab:wcp}.

\subsection{Empirical Analysis of Statistical Efficiency}
Unlike the original FCP framework, which provides a formal inequality under mild assumptions due to its reliance on the simple and convex $\mathcal{L}_p$ norm as a score, a similar theorem for COMPASS is intractable, because the COMPASS score $R_{COMPASS}$ is defined implicitly through a search process that depends on a highly non-linear and non-convex segmentation-to-metric pipeline. 
Instead, we investigated the reason for efficiency gains of COMPASS compared to output-space methods by finding the relationship between COMPASS scores and SCP scores $R_{SCP}$.

The statistical efficiency of COMPASS is a direct consequence of a compressive power-law relationship between feature-space scores ($R_{COMPASS}$) and output-space errors ($R_{SCP}$). A log-log plot of these quantities reveals a linear relationship with a scaling exponent slope$<1$ (Figure~\ref{fig:log_linear_plot}, top). This sub-linear scaling is the direct mechanism for the tail-end compression of the score distribution (Figure~\ref{fig:log_linear_plot}, bottom).
Thus, the largest output-space errors are systematically mapped to disproportionately smaller feature-space scores. A distribution with a compressed tail necessarily has a smaller quantile, which is the fundamental mechanism that enables COMPASS to produce tighter, more statistically efficient prediction intervals.

\begin{figure}[t]
    \centering
    \includegraphics[width=\textwidth]{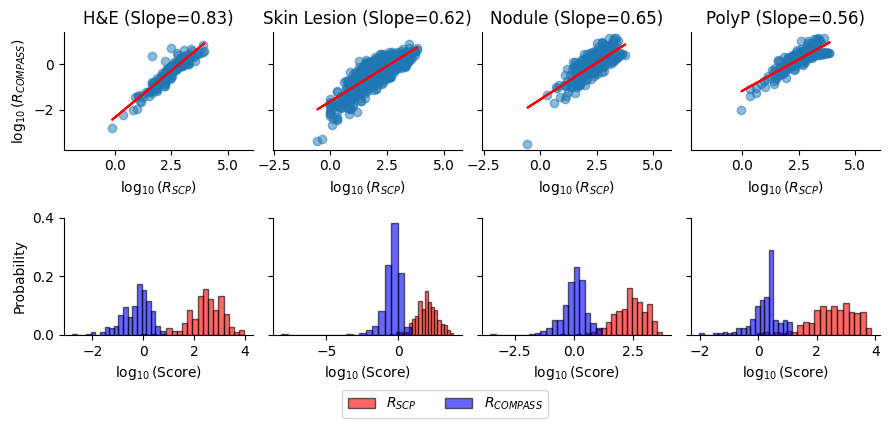}
    \caption{\textbf{The statistical efficiency of COMPASS is driven by a compressive power-law relationship between latent $R_{COMPASS}$ and output $R_{SCP}$ space scores. } As $R_{SCP}$ increases, the required latent space perturbation magnitude increases, but at a progressively slower rate since the scaling exponent (slope) is $<1$ (Figure~\ref{fig:log_linear_plot}, top). This concave and sub-linear scaling is the direct cause of a tail-end compression of the score distribution (bottom). Thus, the long-tail errors are systematically transformed to much smaller feature-space scores.}
    \label{fig:log_linear_plot}
\end{figure}
\section{Discussion}
We introduced COMPASS, a unified framework to generate efficient, metric-based prediction intervals by leveraging the inductive biases of neural networks. Under the assumption of exchangeability, COMPASS creates practical feature-space calibration by perturbing a model’s intermediate representations along low-dimensional subspaces maximally sensitive to the target metric. Across four medical segmentation tasks for area estimation, COMPASS achieves valid coverage while producing significantly tighter intervals than traditional output-space and end-to-end baselines for both standard and weighted CP. We discuss several further points below.

COMPASS has several attractive properties. First, \textbf{it produces instance-adaptive intervals.} For COMPASS-J, the perturbation direction for any new sample is found by projecting its Jacobian onto the principal component and reconstructing the direction.
Although COMPASS-L applies a uniform scalar shift, it achieves adaptivity because this shift interacts with the spatially varying logit distribution unique to each input, resulting in instance-specific adjustments to the segmentation boundary.
Second, \textbf{COMPASS performs better than naive output-space calibration.} In tasks like segmentation, the final metric is a complex non-linear function of the model's output (the pixel mask). Simply adding a margin to a final object size prediction is a crude approximation that fails to capture how the object size is actually derived from the underlying segmentation. In contrast, COMPASS exploits the model's internal structure and spatial understanding to directly manipulate and calibrate the metric. COMPASS has more information and degrees of freedom, which allows it to achieve more efficient interval sizes. Third, \textbf{COMPASS performs better than methods trained using pixel-level losses} such as E2E-CQR. Such approaches optimize a \emph{proxy} (pixel quantiles) for the downstream metric, that may or may not translate into accurate intervals for the final metric. On the other hand, COMPASS is directly calibrated to the final metric, and is therefore more applicable to practical clinical use cases. 

Finally, \textbf{COMPASS is naturally applicable to various segmentation architectures.} We also present the interval lengths and coverage for SegResNet~\citep{myronenko20183d} in Appendix~\ref{app:additionaltables}. COMPASS produces the most efficient interval lengths on the majority of datasets across $\alpha=\{0.05, 0.1,0.15\}$ for SegResNet. 

\paragraph{Limitations.} COMPASS's performance is fundamentally dependent on the quality of the pre-trained model's representations. Our practical algorithm implementation presupposes that the model has learned an inductive bias where the chosen feature space has a reasonably monotonic and sensitive control over the downstream metric.
If the features are poorly aligned with the metric, the resulting perturbation direction may be inefficient or non-monotonic, leading to wide or invalid intervals.
If so, we recommend using the full discretized sweep across the perturbation range to find the minimum and maximum values.
Furthermore, as mentioned in prior work~\citep{lambert2024robust}, WCP with feature or jacobian weights, is effective for moderate distribution shifts where the calibration and test distributions have significant overlap in the feature space. However, for large shifts, the estimated weights may be inaccurate, which can compromise the validity of the final interval.

\begin{wrapfigure}{r}{0.4\textwidth} 
    \vspace{-40pt}
    \includegraphics[width=\linewidth]{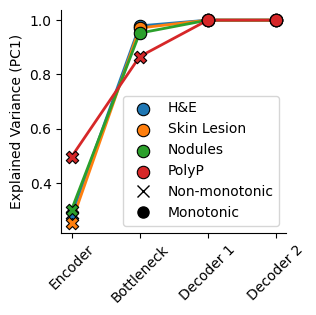}
    \vspace{-20pt}
    \caption{\textbf{Explained variance is a good indicator of monotonicity.} For our 4 datasets, we plot the first principal component's explained variance against the feature layer used for COMPASS-J. Monotonicity and non-monotonicity is indicated by $\bullet$ and $\bm \times$.}
    \vspace{-20pt}
    \label{fig:ev_pca}
\end{wrapfigure}

\subsection{Practical Considerations}
While COMPASS-L may be applicable to UNet-style architectures and object size, other applications may require different setups.
Careful design of the different components of COMPASS, specifically COMPASS-J, will significantly enhance its applicability and performance. 
We provide a list of practical considerations to keep in mind when using COMPASS.


\paragraph{Optimal Feature Layer. }
The choice of the feature layer is critical. 
We find that the optimal layer can be selected empirically by measuring the \emph{explained variance} of the first principal component ($L=1$) of the Jacobians. 
High explained variance in the first principal component indicates that the local gradients of the target metric are globally aligned across the samples.
Consequently, there is a high chance that a linear traversal along this principal direction results in a consistent, monotonic shift in the output metric for the entire dataset, as the projection scalar maintains a consistent sign.
As shown in Figure~\ref{fig:ev_pca}, layers closer to the logits capture the most variance related to object size, validating our use of COMPASS-J.
We recommend finding layers with high explained variance of the Jacobians and verify monotonicity with a perturbation sweep (Algorithm~\ref{alg:monotonicity_verification}).

\paragraph{Number of PCA Components.}
The number of eigenvectors $L$ controls the dimensionality of the perturbation subspace. We find that for the segmentation area metric, the first principal component already captures over 90\% of the metric's variance (Figure~\ref{fig:ev_pca}). Adding more components does not significantly improve interval tightness (Figure~\ref{fig:ncomponents}). Therefore, we recommend $L=1$ for this task, although this may be task-dependent.

\textbf{Spatial Entanglement vs Global Semantic Alignment.} 
We observe that the explained variance of the Jacobians is significantly higher when computed on spatially aggregated features compared to raw flattened features (Figure~\ref{fig:ev_pca_flattened}). 
This results from the translation-variant nature of the flattened space, where sensitivity is tied to specific pixel coordinates that vary with object position, leading to orthogonal gradients across samples.
In contrast, summing the spatial dimensions isolates the channel-wise semantic sensitivity, which remains consistent across samples.
We recommend summing the spatial dimensions of the Jacobians before applying dimensionality reduction techniques.

\textbf{Computationally Efficiency. }
COMPASS maintains high computational efficiency, particularly when the metric response is monotonic. The jacobian computation is inexpensive as gradients are calculated with respect to the scalar metric rather than the high-dimensional output image, leveraging autograd. 
We further resolve potential intractability in PCA by spatially summing the Jacobians, reducing the feature space to a low-dimensional matrix regardless of spatial resolution.
While the jacobian and calibration is fast (Table~\ref{tab:jacobian_runtime} and Figure~\ref{fig:runtime}), we recommend precomputing the jacobians for all data before calibration and testing.

\textit{LLM Disclosure. } During the preparation of this manuscript, we used LLMs (Google Gemini) to polish the initial draft (grammar and style) and improve readability.

\newpage

\subsubsection*{Acknowledgements}
MC would like to acknowledge support from a fellowship from the Gulf Coast Consortia on the NLM Training Program in Biomedical Informatics and Data Science T15LM007093.

\subsubsection*{Ethics Statement}
We develop uncertainty quantification methods for medical image segmentation metrics using publicly available and anonymized datasets. 
The methods are intended as research tools to improve reliability and interpretability, but are not validated for clinical deployment. Risks include potential misinterpretation of uncertainty estimates.
We therefore release code for research purposes only and encourage further evaluation with clinical experts before translation to practice.

\subsubsection*{Reproducibility Statement}
We describe the algorithms in our paper extensively in Appendix~\ref{app:COMPASS-algo}.
We also describe the experimental details in Appendix~\ref{app:expdetails}.
Our code is available at \url{https://github.com/matthewyccheung/compass}.
All packages used in the repository are publicly available.
All datasets used are publicly accessible.

\bibliography{ref}

\newpage
\appendix

\section{Proof of Theorem 1}
\label{app:split_conformal_proof}

\begin{lemma}[Guaranteed Nestedness and Valid Scores]
\label{lem:nestedness}
For any $x\in\mathcal X$, let $m_x(b)$ be defined as in Theorem 1, and let
\[
S_\beta(x) := \left[ \min_{b \in [-\beta, +\beta]} m_x(b), \max_{b \in [-\beta, +\beta]} m_x(b) \right].
\]
This family $\{S_\beta(x)\}_{\beta\ge0}$ is nested.
Furthermore, the nonconformity score
\[
R(x,y) := \inf\{\beta\ge0:\; y\in S_\beta(x)\}
\]
is well-defined in $[0,\infty]$ and satisfies
\[
y\in S_\beta(x)\;\;\Longleftrightarrow\;\; R(x,y)\le\beta.
\]
\end{lemma}

\begin{proof}
First, we prove nestedness. 
Let $0 \le \beta_1 \le \beta_2$. By definition, the perturbation interval $[-\beta_1, +\beta_1]$ is a subset of $[-\beta_2, +\beta_2]$. 
Let 
\[S_{\beta_1}(x) = \left[\min_{b \in [-\beta_1, +\beta_1]} m_x(b), \max_{b \in [-\beta_1, +\beta_1]} m_x(b)\right]\] and \[S_{\beta_2}(x) = \left[\min_{b \in [-\beta_2, +\beta_2]} m_x(b), \max_{b \in [-\beta_2, +\beta_2]} m_x(b)\right].\]
The minimum of a function over a set is always greater than or equal to the minimum over a superset. Thus,
\[
\min_{b \in [-\beta_1, +\beta_1]} m_x(b) \ge \min_{b \in [-\beta_2, +\beta_2]} m_x(b).
\]
Conversely, the maximum over a set is less than or equal to the maximum over a superset. Thus,
\[
\max_{b \in [-\beta_1, +\beta_1]} m_x(b) \le \max_{b \in [-\beta_2, +\beta_2]} m_x(b).
\]
Therefore, by definition of an interval, $S_{\beta_1}(x) \subseteq S_{\beta_2}(x)$, and the family is nested.

Second, we prove the equivalence $y\in S_\beta(x) \Longleftrightarrow R(x,y)\le\beta$. Since we have just proven that the sets $S_\beta(x)$ are nested, they expand monotonically in $\beta$. If $y \in S_{\beta'}(x)$ for some $\beta'$, then $y \in S_\beta(x)$ for all $\beta \ge \beta'$. The set $\{\beta \ge 0: y \in S_\beta(x)\}$ is therefore a closed interval of the form $[R_0, \infty)$ (or $[0, \infty)$ if $y \in S_0(x)$). The infimum $R(x,y)$ is thus well-defined (as $R_0$), and $y \in S_\beta(x)$ if and only if $\beta \ge R(x,y)$.
\end{proof}

\begin{lemma}[Exchangeability yields uniform ranks]
\label{lem:exchangeability}
Let $(X_i,Y_i)_{i=1}^{n+1}$ be exchangeable, and let $R_i := R(X_i,Y_i)$ be derived from $D_{\mathrm{tr}}$. Then the rank of $R_{n+1}$ among $\{R_1,\dots,R_n,R_{n+1}\}$ is uniformly distributed on $\{1,\dots,n+1\}$, conditional on $D_{\mathrm{tr}}$.
\end{lemma}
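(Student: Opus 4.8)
The plan is to reduce the claim to the classical fact that one distinguished coordinate of an exchangeable finite sequence of real numbers has a uniformly distributed rank; the only real work is to verify that the scores $R_1,\dots,R_{n+1}$ themselves form an exchangeable sequence once we condition on $D_{\mathrm{tr}}$.

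First I would condition on $D_{\mathrm{tr}}$ throughout and observe that $R(x,y) = \inf\{\beta \ge 0 : y \in S_\beta(x)\}$ is then a \emph{fixed} measurable function of the pair $(x,y)$: the encoder map $x \mapsto \hat z(x)$, the decoder $g$, the metric $h$, and the direction $\Delta$ are all determined by $D_{\mathrm{tr}}$, and Lemma~\ref{lem:nestedness} guarantees that for each $x$ the infimum is well-defined in $[0,\infty]$ (it is an infimum over a nested, hence monotone, family) and that $R$ is measurable. Call this common function $R(\cdot,\cdot)$. Since $(X_i,Y_i)_{i=1}^{n+1}$ is exchangeable, applying the single measurable map $R$ coordinate-wise preserves exchangeability, so $(R_1,\dots,R_{n+1})$ is exchangeable conditional on $D_{\mathrm{tr}}$: for every permutation $\pi$ of $\{1,\dots,n+1\}$, $(R_{\pi(1)},\dots,R_{\pi(n+1)})$ has the same conditional law as $(R_1,\dots,R_{n+1})$.

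Next I would convert exchangeability of the $R_i$ into uniformity of the rank of $R_{n+1}$. Assuming for the moment that the $R_i$ are almost surely distinct, the rank $\mathrm{rk} := 1 + \sum_{i=1}^n \mathbf 1\{R_i < R_{n+1}\}$ is a well-defined element of $\{1,\dots,n+1\}$; by the permutation symmetry just established, for each fixed $k$ the index achieving the $k$-th smallest value among $\{R_1,\dots,R_{n+1}\}$ is equally likely to be any of the $n+1$ indices, so $\mathbb P(\mathrm{rk}=k \mid D_{\mathrm{tr}}) = 1/(n+1)$, which is the claim. Equivalently one can state it as $\mathbb P(R_{n+1} \le R_{(k)} \mid D_{\mathrm{tr}}) = k/(n+1)$, with $R_{(k)}$ the $k$-th order statistic of the $n+1$ scores, a form that plugs directly into the definition of $\hat\beta$ in Theorem~\ref{thm:split_conformal}.

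The main obstacle is the handling of ties: in our setting the $R_i$ can plausibly coincide with positive probability (e.g.\ clusters of samples with $R_i = 0$, or with $R_i = \infty$ when $Y_i$ can never be included), so the rank is not literally uniform. I would address this in the standard conformal manner --- either impose an almost-sure-distinctness (continuity) hypothesis, or break ties by an independent uniform randomization or lexicographically by index --- and, crucially, note that for the downstream coverage statement ties can only \emph{help}: relaxing the strict inequality to a non-strict one only enlarges the event $\{R_{n+1} \le \hat\beta\}$, so the bound $\mathbb P(Y_{n+1} \in S_{\hat\beta}(X_{n+1}) \mid D_{\mathrm{tr}}) \ge 1-\alpha$ of Theorem~\ref{thm:split_conformal} is preserved. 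The remaining ingredient, namely translating ``$R_{n+1} \le \hat\beta$'' into the coverage event via the equivalence $Y_i \in S_\beta(X_i) \Leftrightarrow R_i \le \beta$, is exactly what Lemma~\ref{lem:nestedness} supplies.
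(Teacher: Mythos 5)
Your proof takes essentially the same route as the paper's: condition on $D_{\mathrm{tr}}$, observe that $R$ is then a fixed measurable function of $(x,y)$ so that exchangeability of $(X_i,Y_i)_{i=1}^{n+1}$ is inherited by $(R_i)_{i=1}^{n+1}$, and conclude that the rank of the distinguished coordinate $R_{n+1}$ is uniform on $\{1,\dots,n+1\}$. You do, however, fill in a genuine gap that the paper's one-line argument passes over: uniformity of the rank requires either almost-sure distinctness of the scores or a tie-breaking convention, and in this setting ties at $R_i=0$ or $R_i=\infty$ can plausibly occur with positive probability. Your observation that ties can only enlarge the event $\{R_{n+1}\le\hat\beta\}$, and hence only make the coverage guarantee of Theorem~\ref{thm:split_conformal} more conservative, is the standard and correct resolution, and it is a worthwhile addition since the lemma as literally stated in the paper is not quite true without one of these caveats.
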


\begin{proof}
Since the nonconformity score $R(\cdot,\cdot)$ is a measurable, deterministic function of $(X_i,Y_i)$ and $D_{\mathrm{tr}}$, the exchangeability of the pairs $(X_i,Y_i)$ implies the exchangeability of the scalar scores $(R_i)_{i=1}^{n+1}$. Therefore, by the standard properties of exchangeable random variables, the rank of $R_{n+1}$ among all $n+1$ scores is uniformly distributed on $\{1, \dots, n+1\}$.
\end{proof}

\begin{theorem}[Split-conformal coverage]
Under the assumptions of Theorem 1, for any $\alpha\in(0,1)$, the prediction set $S_{\hat\beta}(X_{n+1})$ satisfies
\[
\mathbb P\!\left(Y_{n+1}\in S_{\hat\beta}(X_{n+1})\;\middle|\;D_{\mathrm{tr}}\right)\;\ge\;1-\alpha.
\]
\end{theorem}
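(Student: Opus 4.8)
The plan is to prove the split-conformal coverage bound by combining the two lemmas already stated (Lemma~\ref{lem:nestedness} and Lemma~\ref{lem:exchangeability}) in the standard quantile-lemma fashion, conditioning throughout on $D_{\mathrm{tr}}$ so that $g$, $\hat z(\cdot)$, and $\Delta$ are all fixed deterministic objects. First I would invoke Lemma~\ref{lem:nestedness} to translate the event of interest: since $\{S_\beta(X_{n+1})\}_{\beta\ge0}$ is nested, we have the equivalence $Y_{n+1}\in S_{\hat\beta}(X_{n+1}) \iff R_{n+1}\le\hat\beta$. So it suffices to show $\mathbb{P}(R_{n+1}\le\hat\beta \mid D_{\mathrm{tr}})\ge 1-\alpha$.

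Next I would relate the event $\{R_{n+1}\le\hat\beta\}$ to a rank statement. Recall $\hat\beta$ is the $\lceil(1-\alpha)(n+1)\rceil$-th smallest value among $\{R_1,\dots,R_n\}$; write $k:=\lceil(1-\alpha)(n+1)\rceil$. The key observation is that $R_{n+1}\le\hat\beta$ holds whenever $R_{n+1}$ is among the $k$ smallest values of the augmented set $\{R_1,\dots,R_n,R_{n+1}\}$, i.e.\ whenever $\mathrm{rank}(R_{n+1})\le k$ (using, say, the convention that ties are broken in favor of the test point, or equivalently defining rank via $\sum_{i=1}^{n+1}\mathbf{1}\{R_i\le R_{n+1}\}$ — I would state the tie-breaking convention explicitly to avoid ambiguity). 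Then by Lemma~\ref{lem:exchangeability}, conditional on $D_{\mathrm{tr}}$ the rank of $R_{n+1}$ is uniform on $\{1,\dots,n+1\}$, so $\mathbb{P}(\mathrm{rank}(R_{n+1})\le k \mid D_{\mathrm{tr}}) = k/(n+1) = \lceil(1-\alpha)(n+1)\rceil/(n+1) \ge (1-\alpha)(n+1)/(n+1) = 1-\alpha$. Chaining the inequalities gives the claim.

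The main obstacle — really the only subtle point — is handling ties and the exact bookkeeping of ranks versus quantiles cleanly, especially since $R_i$ can equal $+\infty$ (when $Y_i$ is never covered by any $S_\beta(X_i)$). I would address the infinite-score case by noting that the rank argument is unaffected: $+\infty$ values simply sit at the top of the ordering, and the uniform-rank conclusion of Lemma~\ref{lem:exchangeability} still holds for the extended-real-valued scores since exchangeability is preserved under any fixed measurable transformation into $[0,\infty]$. For ties among finite values, adopting the convention $\mathrm{rank}(R_{n+1}):=\big|\{i\in\{1,\dots,n+1\}: R_i\le R_{n+1}\}\big|$ makes the implication $\{\mathrm{rank}(R_{n+1})\le k\}\subseteq\{R_{n+1}\le\hat\beta\}$ hold deterministically, while the uniform-rank lemma continues to apply because ties occur with the same conditional probability structure under exchangeability (and in the worst case the bound is only helped). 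A brief remark that this recovers the classical result of Vovk et al.\ / Lei et al.\ specialized to our nested family would close the argument.
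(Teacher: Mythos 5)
Your proof is correct and follows essentially the same route as the paper's: Lemma~\ref{lem:nestedness} to convert coverage into the event $\{R_{n+1}\le\hat\beta\}$, then Lemma~\ref{lem:exchangeability} plus the quantile construction to bound that event's probability by $\lceil(1-\alpha)(n+1)\rceil/(n+1)\ge 1-\alpha$. You are somewhat more careful than the paper's terse exposition in spelling out the tie-breaking convention for ranks and the $R_i=+\infty$ edge case, but these are refinements of, not departures from, the same argument.
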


\begin{proof}
By Lemma~\ref{lem:nestedness}, our construction of $S_\beta(x)$ guarantees nestedness and also guarantees that $Y_{n+1}\in S_\beta(X_{n+1})$ if and only if $R_{n+1}\le\beta$. Applying this to our calibrated quantile $\hat\beta$, we have
\[
Y_{n+1}\in S_{\hat\beta}(X_{n+1}) \quad\Longleftrightarrow\quad R_{n+1}\le\hat\beta.
\]
By Lemma~\ref{lem:exchangeability}, the rank of $R_{n+1}$ is uniform on $\{1,\dots,n+1\}$.
By definition, $\hat\beta$ is the $\lceil (1-\alpha)(n+1)\rceil$-th smallest value among the calibration scores $\{R_1,\dots,R_n\}$. By the standard p-value argument for split conformal prediction, the probability that the test score $R_{n+1}$ is less than or equal to $\hat\beta$ is bounded.
\[
\mathbb P(R_{n+1}\le \hat\beta \mid D_{\mathrm{tr}})\;\ge\;1-\alpha.
\]
Combining these, we have
\[
\mathbb P\!\left(Y_{n+1}\in S_{\hat\beta}(X_{n+1})\;\middle|\;D_{\mathrm{tr}}\right) = \mathbb P(R_{n+1}\le \hat\beta \mid D_{\mathrm{tr}}) \ge 1-\alpha,
\]
which proves the theorem.
\end{proof}

\paragraph{Connection to Standard CP.}
Theorem 1 generalizes the standard split conformal method.
In the classical setting for regression, $S_\beta(x)=[\hat{y}(x)-\beta, \hat{y}(x)+\beta]$ corresponds to symmetric absolute-error balls around a predictor $\hat{y}(x)$. This family is trivially nested.
Here, the sets $S_\beta(x)$ are generated by perturbations of latent features through $\Delta$ and evaluation under the downstream metric $h\circ g$.
By rigorously defining $S_\beta(x)$ as the $[\min, \max]$ envelope, we guarantee the nestedness property required for the proof. The same rank-uniformity argument from exchangeability then ensures coverage.
Thus, standard split conformal prediction is recovered as a special case of this framework.

\section{Asymmetric Perturbations}
\label{app:asymmetric_theory}

We extend the symmetric method to provide more flexible and efficient prediction intervals to the asymmetric case.
This approach is particularly valuable when the relationship between latent perturbations and the downstream metric is non-linear or asymmetric.
For instance, a positive perturbation of a certain magnitude might cause a large increase in metric, while a negative perturbation of the same magnitude results in only a small decrease.
It also allows for independent control over the rate of under- and over-estimation errors, specified by $\alpha_{\mathrm{lo}}$ and $\alpha_{\mathrm{hi}}$.
To account for this, we calculate two distinct non-conformity scores ($R_{\mathrm{lo}}$, $R_{\mathrm{hi}}$) and two corresponding perturbation magnitudes ($\hat\beta_{\mathrm{lo}}$, $\hat\beta_{\mathrm{hi}}$).
By calibrating the upper and lower bounds independently, Asymmetric COMPASS constructs an interval that adapts to these skewed error distributions, often resulting in tighter bounds than its symmetric counterpart in such cases.

\begin{theorem}[Asymmetric Split-Conformal Coverage]
\label{thm:asymmetric_split_conformal}
Let the setup be the same as in Theorem~\ref{thm:split_conformal}, with metric function $m_x(b) := (h \circ g)(\hat z(x) + b \Delta)$.
We define the one-sided envelope functions:
\begin{align*}
L_x(\beta) &:= \min_{b \in [-\beta, 0]} m_x(b) \\
U_x(\beta) &:= \max_{b \in [0, +\beta]} m_x(b)
\end{align*}
These functions $L_x(\beta)$ and $U_x(\beta)$ are guaranteed to be non-increasing and non-decreasing in $\beta$, respectively.

For each calibration pair $(X_i,Y_i)$, define two non-conformity scores:
\begin{align*}
R_{i, \mathrm{lo}} &:= \inf\{\beta \ge 0: L_{X_i}(\beta) \le Y_i\}, \\
R_{i, \mathrm{hi}} &:= \inf\{\beta \ge 0: U_{X_i}(\beta) \ge Y_i\}.
\end{align*}
Let $\alpha_{\mathrm{lo}}, \alpha_{\mathrm{hi}} \in (0,1)$ be user-specified miscoverage rates. Let $\hat{\beta}_{\mathrm{lo}}$ be the $\lceil (1-\alpha_{\mathrm{lo}})(n+1) \rceil$-th smallest value among $\{R_{1, \mathrm{lo}},\dots,R_{n, \mathrm{lo}}\}$, and let $\hat{\beta}_{\mathrm{hi}}$ be the $\lceil (1-\alpha_{\mathrm{hi}})(n+1) \rceil$-th smallest value among $\{R_{1, \mathrm{hi}},\dots,R_{n, \mathrm{hi}}\}$.

Then, for a fresh test pair $(X_{n+1},Y_{n+1})$, the prediction set
\[
S(X_{n+1}) := [L_{X_{n+1}}(\hat{\beta}_{\mathrm{lo}}), \; U_{X_{n+1}}(\hat{\beta}_{\mathrm{hi}})]
\]
satisfies the marginal coverage guarantee
\[
\mathbb P\bigl(Y_{n+1} \in S(X_{n+1}) \;\big|\; D_{\mathrm{tr}}\bigr) \ge 1-(\alpha_{\mathrm{lo}} + \alpha_{\mathrm{hi}}).
\]
\end{theorem}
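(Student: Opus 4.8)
The plan is to reduce the asymmetric theorem to two applications of the one-sided coverage argument already established for Theorem~\ref{thm:split_conformal}, and then combine the two failure probabilities with a union bound. First I would establish a one-sided analogue of Lemma~\ref{lem:nestedness}: under nestedness, the map $\beta \mapsto m_x(-\beta)$ is non-increasing and $\beta \mapsto m_x(+\beta)$ is non-decreasing (since $S_{\beta_1}(x)\subseteq S_{\beta_2}(x)$ for $\beta_1\le\beta_2$ means exactly that the left endpoint moves left and the right endpoint moves right). Consequently $\{\beta\ge 0 : m_{X_i}(-\beta)\le Y_i\}$ is an up-set, so $R_{i,\mathrm{lo}}$ is well-defined in $[0,\infty]$ and satisfies $m_{X_i}(-\beta)\le Y_i \iff \beta \ge R_{i,\mathrm{lo}}$; symmetrically $m_{X_i}(+\beta)\ge Y_i \iff \beta\ge R_{i,\mathrm{hi}}$.

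Next I would apply Lemma~\ref{lem:exchangeability} separately to each score family. Since $R_{i,\mathrm{lo}} = R_{\mathrm{lo}}(X_i,Y_i)$ and $R_{i,\mathrm{hi}} = R_{\mathrm{hi}}(X_i,Y_i)$ are each measurable deterministic functions of $(X_i,Y_i)$ and $D_{\mathrm{tr}}$, exchangeability of the pairs gives exchangeability of $(R_{i,\mathrm{lo}})_{i=1}^{n+1}$ and of $(R_{i,\mathrm{hi}})_{i=1}^{n+1}$; hence the rank of $R_{n+1,\mathrm{lo}}$ (resp. $R_{n+1,\mathrm{hi}}$) among the $n+1$ values is uniform on $\{1,\dots,n+1\}$ conditional on $D_{\mathrm{tr}}$. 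With $\hat\beta_{\mathrm{lo}}$ the $\lceil(1-\alpha_{\mathrm{lo}})(n+1)\rceil$-th order statistic of the calibration lo-scores, the standard quantile argument yields $\mathbb P(R_{n+1,\mathrm{lo}} > \hat\beta_{\mathrm{lo}}\mid D_{\mathrm{tr}}) \le \alpha_{\mathrm{lo}}$, and likewise $\mathbb P(R_{n+1,\mathrm{hi}} > \hat\beta_{\mathrm{hi}}\mid D_{\mathrm{tr}}) \le \alpha_{\mathrm{hi}}$. By the one-sided equivalences, the event $Y_{n+1} \notin S(X_{n+1})$ — i.e. $Y_{n+1} < m_{X_{n+1}}(-\hat\beta_{\mathrm{lo}})$ or $Y_{n+1} > m_{X_{n+1}}(+\hat\beta_{\mathrm{hi}})$ — is contained in $\{R_{n+1,\mathrm{lo}} > \hat\beta_{\mathrm{lo}}\} \cup \{R_{n+1,\mathrm{hi}} > \hat\beta_{\mathrm{hi}}\}$. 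A union bound then gives $\mathbb P(Y_{n+1}\notin S(X_{n+1})\mid D_{\mathrm{tr}}) \le \alpha_{\mathrm{lo}} + \alpha_{\mathrm{hi}}$, which is the claim.

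The main obstacle is making the one-sided score equivalences airtight at the boundary and at infinity. One must check that $R_{i,\mathrm{lo}}=\infty$ exactly when $Y_i$ is never caught by any lower endpoint (so the interval genuinely fails on that side), that the infimum is attained when finite — which requires a mild regularity assumption such as right-continuity of $\beta\mapsto m_x(\pm\beta)$, automatically satisfied since $h\circ g$ is continuous and the perturbation is continuous in $\beta$ — and that the inclusion of the failure event in the union is tight rather than merely heuristic. It is also worth noting that because the two scores are calibrated on the same calibration set, $\hat\beta_{\mathrm{lo}}$ and $\hat\beta_{\mathrm{hi}}$ are dependent, but the union bound sidesteps any need to control their joint distribution, so the proof goes through without independence. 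The resulting guarantee $1-(\alpha_{\mathrm{lo}}+\alpha_{\mathrm{hi}})$ is the expected Bonferroni-type price for treating the two tails separately; taking $\alpha_{\mathrm{lo}}=\alpha_{\mathrm{hi}}=\alpha/2$ recovers a symmetric $1-\alpha$ guarantee as a special case.
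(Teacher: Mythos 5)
Your proposal is correct and follows essentially the same route as the paper's own proof: a one-sided nestedness lemma giving the equivalences $m_{X_i}(-\beta)\le Y_i \iff R_{i,\mathrm{lo}}\le\beta$ and $m_{X_i}(+\beta)\ge Y_i \iff R_{i,\mathrm{hi}}\le\beta$, an exchangeability lemma applied separately to each score family to get uniform ranks and the two one-sided quantile bounds, and a union bound to combine them into the $1-(\alpha_{\mathrm{lo}}+\alpha_{\mathrm{hi}})$ guarantee. Your additional remarks on attainment of the infimum, right-continuity, and the fact that the union bound makes the dependence between $\hat\beta_{\mathrm{lo}}$ and $\hat\beta_{\mathrm{hi}}$ irrelevant are sound and, if anything, slightly more careful than the paper's exposition.
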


\subsection{Proof of Theorem~\ref{thm:asymmetric_split_conformal}}

\begin{lemma}[Monotonic Bounds and Valid One-Sided Scores]
\label{lem:monotonic_bounds_asymm}
For any $x\in\mathcal X$, let $L_x(\beta)$ and $U_x(\beta)$ be defined as in Theorem~\ref{thm:asymmetric_split_conformal}.
The function $L_x(\beta)$ is non-increasing in $\beta$, and the function $U_x(\beta)$ is non-decreasing in $\beta$.
Furthermore, the nonconformity scores $R_{\mathrm{lo}}(x,y)$ and $R_{\mathrm{hi}}(x,y)$ are well-defined in $[0,\infty]$ and satisfy
\begin{align*}
L_x(\beta) \le y \quad &\Longleftrightarrow \quad R_{\mathrm{lo}}(x,y) \le \beta, \\
U_x(\beta) \ge y \quad &\Longleftrightarrow \quad R_{\mathrm{hi}}(x,y) \le \beta.
\end{align*}
\end{lemma}

\begin{proof}
Let $0 \le \beta_1 \le \beta_2$. By definition, the interval $[-\beta_1, 0]$ is a subset of $[-\beta_2, 0]$. The minimum of a function over a set is always greater than or equal to the minimum over a superset. Thus:
\[
L_x(\beta_1) = \min_{b \in [-\beta_1, 0]} m_x(b) \ge \min_{b \in [-\beta_2, 0]} m_x(b) = L_x(\beta_2).
\]
This proves $L_x(\beta)$ is non-increasing.
Similarly, $[0, +\beta_1]$ is a subset of $[0, +\beta_2]$. The maximum of a function over a set is always less than or equal to the maximum over a superset. Thus:
\[
U_x(\beta_1) = \max_{b \in [0, +\beta_1]} m_x(b) \le \max_{b \in [0, +\beta_2]} m_x(b) = U_x(\beta_2).
\]
This proves $U_x(\beta)$ is non-decreasing.

This one-sided monotonicity ensures that if a bound is satisfied for some $\beta'$ (e.g., $L_x(\beta') \le y$), it remains satisfied for all $\beta \ge \beta'$. Hence, the infima in the definitions of the scores exist, and the equivalences hold.
\end{proof}

\begin{lemma}[Exchangeability yields uniform ranks]
\label{lem:exchangeability_asymm}
Let $(X_i,Y_i)_{i=1}^{n+1}$ be exchangeable. Let $R_{i, \mathrm{lo}} := R_{\mathrm{lo}}(X_i,Y_i)$ and $R_{i, \mathrm{hi}} := R_{\mathrm{hi}}(X_i,Y_i)$ be derived from $D_{\mathrm{tr}}$. Then the ranks of $R_{n+1, \mathrm{lo}}$ among $\{R_{1, \mathrm{lo}},\dots,R_{n+1, \mathrm{lo}}\}$ and $R_{n+1, \mathrm{hi}}$ among $\{R_{1, \mathrm{hi}},\dots,R_{n+1, \mathrm{hi}}\}$ are each uniformly distributed on $\{1,\dots,n+1\}$, conditional on $D_{\mathrm{tr}}$.
\end{lemma}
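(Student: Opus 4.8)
The plan is to reduce the claim to the already-established "exchangeability $\Rightarrow$ uniform ranks" fact (Lemma~\ref{lem:exchangeability} in the symmetric case), by showing that each of the sequences $(R_{i,\mathrm{lo}})_{i=1}^{n+1}$ and $(R_{i,\mathrm{hi}})_{i=1}^{n+1}$ is itself exchangeable conditional on $D_{\mathrm{tr}}$. First I would condition on $D_{\mathrm{tr}}$ throughout; under this conditioning the decoder $g$, the metric $h$, the perturbation direction $\Delta$, and the encoder map $x \mapsto \hat z(x)$ are all fixed deterministic measurable objects, so $\beta \mapsto m_x(\pm\beta) = (h\circ g)(\hat z(x)\pm\beta\Delta)$ is a deterministic function of $x$ alone. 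Consequently the one-sided scores can be written as $R_{i,\mathrm{lo}} = \phi_{\mathrm{lo}}(X_i,Y_i)$ and $R_{i,\mathrm{hi}} = \phi_{\mathrm{hi}}(X_i,Y_i)$ for the fixed maps $\phi_{\mathrm{lo}}(x,y) := \inf\{\beta\ge 0: m_x(-\beta)\le y\}$ and $\phi_{\mathrm{hi}}(x,y) := \inf\{\beta\ge 0: m_x(+\beta)\ge y\}$; Lemma~\ref{lem:nestedness_asymm} guarantees these infima are well-defined in $[0,\infty]$, since nestedness makes $m_x(-\beta)$ non-increasing and $m_x(+\beta)$ non-decreasing in $\beta$.

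The core step is then the observation that applying one fixed measurable function coordinatewise to an exchangeable sequence produces an exchangeable sequence: for any permutation $\pi$ of $\{1,\dots,n+1\}$, $(\phi_{\mathrm{lo}}(X_{\pi(i)},Y_{\pi(i)}))_{i=1}^{n+1}$ has the same joint law as $(\phi_{\mathrm{lo}}(X_i,Y_i))_{i=1}^{n+1}$, because $(X_{\pi(i)},Y_{\pi(i)})_{i=1}^{n+1} \stackrel{d}{=} (X_i,Y_i)_{i=1}^{n+1}$ by exchangeability of the data. Hence $(R_{1,\mathrm{lo}},\dots,R_{n+1,\mathrm{lo}})$ is exchangeable conditional on $D_{\mathrm{tr}}$, and identically so is $(R_{1,\mathrm{hi}},\dots,R_{n+1,\mathrm{hi}})$. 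For any exchangeable finite collection of extended-real-valued random variables with almost surely distinct entries, the rank of any designated coordinate — here the $(n+1)$-th — is uniform on $\{1,\dots,n+1\}$; applying this separately to the $\mathrm{lo}$ and $\mathrm{hi}$ sequences yields the two claimed rank distributions. This last fact is exactly what underlies Lemma~\ref{lem:exchangeability}, so I would invoke it rather than re-derive it.

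The only point requiring genuine care is ties. If $\mathbb P(R_{i,\mathrm{lo}} = R_{j,\mathrm{lo}}) > 0$ for some $i\ne j$, the rank of $R_{n+1,\mathrm{lo}}$ is no longer exactly uniform, but it is stochastically no larger than uniform, i.e.\ $\mathbb P(\mathrm{rank}(R_{n+1,\mathrm{lo}}) \le k \mid D_{\mathrm{tr}}) \ge k/(n+1)$ for every $k$; this one-sided statement is all the downstream coverage argument actually consumes, and it still yields the $\ge 1-(\alpha_{\mathrm{lo}}+\alpha_{\mathrm{hi}})$ bound of Theorem~\ref{thm:asymmetric_split_conformal} via a union bound over the two one-sided failure events. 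I would therefore either state the lemma under the usual almost-sure-distinctness convention, or replace "uniformly distributed" by this stochastic-dominance statement, remarking that the standard remedies (a continuity assumption on the score law, or an independent infinitesimal tie-break) recover exact uniformity. A secondary, purely technical item is joint measurability of $\phi_{\mathrm{lo}}$ and $\phi_{\mathrm{hi}}$: since $(x,\beta)\mapsto m_x(\pm\beta)$ is jointly measurable and monotone in $\beta$, the generalized-inverse formulas above define measurable functions of $(x,y)$ by a standard argument, which I would cite rather than prove. I do not anticipate any real obstacle beyond this bookkeeping — the lemma is essentially the asymmetric twin of Lemma~\ref{lem:exchangeability}, with the single score replaced by the pair $(\phi_{\mathrm{lo}},\phi_{\mathrm{hi}})$.
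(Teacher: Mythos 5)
Your proof is correct and follows essentially the same route as the paper: condition on $D_{\mathrm{tr}}$ so that each one-sided score becomes a fixed measurable function of $(X_i,Y_i)$, deduce exchangeability of each score sequence by applying that function coordinatewise to the exchangeable data, and invoke the uniform-rank property for the $(n+1)$-th coordinate. Your additional care over ties (the lemma's exact-uniformity claim strictly requires almost-sure distinctness of the scores, whereas the one-sided stochastic-dominance statement $\mathbb P(\mathrm{rank}(R_{n+1}) \le k \mid D_{\mathrm{tr}}) \ge k/(n+1)$ is all Theorem~\ref{thm:asymmetric_split_conformal} actually consumes) and over joint measurability of the generalized-inverse maps $\phi_{\mathrm{lo}},\phi_{\mathrm{hi}}$ are genuine refinements that the paper's terse two-sentence proof glosses over, but they do not change the underlying argument.
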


\begin{proof}
Since the nonconformity scores are measurable, deterministic functions of $(X_i,Y_i)$ and $D_{\mathrm{tr}}$, the exchangeability of the data pairs implies the exchangeability of each sequence of scores, $(R_{i, \mathrm{lo}})_{i=1}^{n+1}$ and $(R_{i, \mathrm{hi}})_{i=1}^{n+1}$. Therefore, the rank of the test score within each sequence is uniform.
\end{proof}

\begin{proof}[Proof of Theorem~\ref{thm:asymmetric_split_conformal}]
The total miscoverage event is the union of two one-sided miscoverage events:
\[
\mathbb{P}(Y_{n+1} \notin S(X_{n+1})) = \mathbb{P}(Y_{n+1} < L_{X_{n+1}}(\hat{\beta}_{\mathrm{lo}}) \text{ or } Y_{n+1} > U_{X_{n+1}}(\hat{\beta}_{\mathrm{hi}})).
\]
By the union bound,
\[
\mathbb{P}(Y_{n+1} \notin S(X_{n+1})) \le \mathbb{P}(Y_{n+1} < L_{X_{n+1}}(\hat{\beta}_{\mathrm{lo}})) + \mathbb{P}(Y_{n+1} > U_{X_{n+1}}(\hat{\beta}_{\mathrm{hi}})).
\]
By Lemma~\ref{lem:monotonic_bounds_asymm}, these one-sided events are equivalent to events concerning the scores:
\[
\mathbb{P}(Y_{n+1} \notin S(X_{n+1})) \le \mathbb{P}(R_{n+1, \mathrm{lo}} > \hat{\beta}_{\mathrm{lo}}) + \mathbb{P}(R_{n+1, \mathrm{hi}} > \hat{\beta}_{\mathrm{hi}}).
\]
By Lemma~\ref{lem:exchangeability_asymm} and the construction of $\hat{\beta}_{\mathrm{lo}}$ and $\hat{\beta}_{\mathrm{hi}}$ as quantiles, the probabilities of these rank-based events are bounded by $\alpha_{\mathrm{lo}}$ and $\alpha_{\mathrm{hi}}$, respectively. Thus,
\[
\mathbb{P}(Y_{n+1} \notin S(X_{n+1})) \le \alpha_{\mathrm{lo}} + \alpha_{\mathrm{hi}}.
\]
The claim for the coverage probability, $\mathbb{P}(Y_{n+1} \in S(X_{n+1})) \ge 1 - (\alpha_{\mathrm{lo}} + \alpha_{\mathrm{hi}})$, follows.
\end{proof}

This theorem provides a rigorous guarantee for asymmetric intervals. The proof relies on the same core principles of exchangeability, but instead of assuming nestedness, we guarantee the required one-sided monotonicity by defining the bounds $L_x(\beta)$ and $U_x(\beta)$ as the $[\min, \max]$ over the respective perturbation ranges.

\newpage
\section{Algorithms}
\label{app:COMPASS-algo}

In the COMPASS framework, we decompose the model into an encoder $f$ that maps an input image $x$ to a representation $\hat{z}$, and a decoder $g$ that maps $\hat{z}$ to the final segmentation. A downstream metric function $h$ then maps the segmentation to a scalar metric $\hat{y}$. 
For simplicity, we show the algorithm using the endpoint (and not the conservative envelope that requires a perturbation sweep during the calibration procedure) -- i.e., we define the perturbation and metric evaluation function $P(\hat{z}, \Delta, \beta)$ as the interval $[h(g(\hat{z} - \beta\Delta)), h(g(\hat{z} + \beta\Delta))]$. 
This endpoint construction is computationally efficient and theoretically valid given the monotonic metric response observed in our experiments (Figures~\ref{fig:linearperturb} and~\ref{fig:monotonic}).
The conservative envelope variant follows the same approach, but the symmetric binary search involves a perturbation sweep for each sample during the calibration step to find the minimum and maximum.

For \textbf{COMPASS-L}, the representation $\hat{z}$ is the logits; the encoder $f$ is the entire network, and the decoder $g$ is the identity function. The perturbation direction is simply $\Delta = \mathbf{1}$.

For \textbf{COMPASS-J}, the representation $\hat{z}$ is an internal feature map. To find the sensitive direction $\Delta$, we first compute the Jacobian of the metric with respect to the features, $J = \nabla_{\hat{z}} h(g(\hat{z}))$. We then project this Jacobian onto the subspace spanned by the top $L$ principal components ($V_L$) of the training set Jacobians and reconstruct it. The final direction $\Delta$ is the normalized reconstruction: $\mathbf{d} = V_L V_L^\top J, \Delta = \frac{\mathbf{d}}{\|\mathbf{d}\|_2}$.

In general, the COMPASS framework follows several key steps:
\begin{enumerate}
    \item \textbf{Training:} Assume an already trained base segmentation model, decomposed into an encoder $f$ and decoder $g$. If using \textbf{COMPASS-J}, compute the Jacobians of the metric with respect to the latent features $\hat{z}$ for the training set and perform PCA to identify the dominant sensitivity subspace $V_L$.
    \item \textbf{Monotonicity Verification:} Perform a perturbation sweep on a subset of the data to verify that the metric response $m_x(b)$ is monotonic with respect to the perturbation magnitude along the computed direction $\Delta$. This empirical check justifies the use of the endpoints over the computationally expensive envelope construction.
    \item \textbf{Calibration:} For each sample in the calibration set, compute the non-conformity score $R_i$ using a binary search. $R_i$ is the minimal perturbation magnitude $\beta$ such that the computed interval covers the ground truth. Finally, compute the calibrated quantile $\hat{\beta}$ from these scores using the standard finite-sample correction.
    \item \textbf{Testing:} For a new test sample $x_{n+1}$, compute its specific perturbation direction $\Delta_{n+1}$ (using the learned subspace $V_L$ for COMPASS-J or $\mathbf{1}$ for COMPASS-L). Construct the final prediction interval using the calibrated $\hat{\beta}$ and the interval endpoints: $S_{\hat{\beta}}(x_{n+1}) = [m_{n+1}(-\hat{\beta}), m_{n+1}(+\hat{\beta})]$.
\end{enumerate}

\subsection{Training}

\begin{algorithm}[H]
\caption{COMPASS Training}
\label{alg:COMPASS_Training}
\begin{algorithmic}[1]
\Require Training data $D_{\mathrm{tr}}$, $L$ (number of components for COMPASS-J).
\Ensure Trained model $(f, g)$, and subspace $V_L$ (for COMPASS-J).

\Statex
\Procedure{Training}{$D_{\mathrm{tr}}$, $L$}
    \State Train the segmentation model $(f, g) \leftarrow \argmin_{f,g} \mathcal{L}_{\text{seg}}(D_{\mathrm{tr}})$.
    \State \textbf{if} using COMPASS-J \textbf{then}
        \State \quad Compute Jacobians: $\mathcal{J} \leftarrow \{\sum_{\text{spatial}} \nabla_{\hat{z}_j} h(g(f(x_j)))\}_{j \in D_{\mathrm{tr}}}$.
        \State \quad Compute PCA: $V_L \leftarrow \text{PCA}(\mathcal{J}, L)$.
        \State \quad \textbf{return} $f, g, V_L$.
    \State \textbf{else} \Comment{COMPASS-L requires no extra training}
        \State \quad \textbf{return} $f, g$.
    \State \textbf{end if}
\EndProcedure
\end{algorithmic}
\end{algorithm}

\subsection{Monotonicity Verification}

\begin{algorithm}[H]
\caption{Metric Monotonicity Verification}
\label{alg:monotonicity_verification}
\begin{algorithmic}[1]
\Require Model $(f,g,V_L)$, Dataset $D=\{(x_i, y_i)\}_{i=1}^n$, $\beta_{\text{max}}$, $N_{\text{steps}}$.
\Ensure Percentage of samples with monotonic metric response.

\Statex
\Procedure{Verify-Monotonicity}{$D$, model, $h, \beta_{\text{max}}, N_{\text{steps}}$}
    \State \textbf{Initialize} $count \leftarrow 0$.
    \For{$i=1, \dots, n$}
        \State $\hat{z}_i \leftarrow f(x_i)$
        \If{using COMPASS-J}
            \State $\mathbf{d}_i \leftarrow V_L V_L^\top \nabla_{\hat{z}} h(g(\hat{z}_i))$ \Comment{Project Jacobian}
            \State $\Delta_i \leftarrow \mathbf{d}_i / \|\mathbf{d}_i\|_2$ \Comment{Normalize}
        \Else
            \State $\Delta_i \leftarrow \mathbf{1}$
        \EndIf
        
        \State $B \leftarrow \text{LinSpace}(-\beta_{\text{max}}, \beta_{\text{max}}, N_{\text{steps}})$. \Comment{Discretized grid}
        \State $M \leftarrow \emptyset$.
        \For{$b \in B$}
            \State $M \leftarrow M \cup \{ h(g(\hat{z}_i + b\Delta_i)) \}$. \Comment{Evaluate metric}
        \EndFor
        
        \If{\Call{Is-Sorted-Ascending}{$M$}} \Comment{Check monotonicity}
            \State $count \leftarrow count + 1$.
        \EndIf
    \EndFor
    \State \textbf{return} $count / n$
\EndProcedure
\end{algorithmic}
\end{algorithm}
Algorithm \ref{alg:monotonicity_verification} is a diagnostic step performed after training. It verifies that perturbing the features along the computed direction $\Delta$ results in a monotonic change in the metric (area). High monotonicity validates the use of the efficient endpoint-based interval construction in the subsequent calibration and inference steps.

\subsection{Symmetric COMPASS}

\begin{algorithm}[H]
\caption{Symmetric COMPASS (Calibration \& Inference)}
\label{alg:COMPASS_symmetric}
\begin{algorithmic}[1]
\Require Calibration $D_{\mathrm{cal}}$, Test $x_{n+1}$, $\alpha$, Model $(f,g,V_L)$.
\Ensure Prediction interval $S_{\hat{\beta}}(x_{n+1})$.

\Statex
\Procedure{Calibration}{$D_{\mathrm{cal}}$, model, $h, \alpha$}
    \State \textbf{Initialize} scores $\mathcal{R} \leftarrow \emptyset$.
    \For{$i=1, \dots, n$}
        \State Define $\hat{z}_i$ and compute $\Delta_i$ (as in Alg. \ref{alg:monotonicity_verification}).
        \State $R_i \leftarrow \Call{Symmetric-Binary-Search}{\hat{z}_i, \Delta_i, y_i}$.
        \State $\mathcal{R} \leftarrow \mathcal{R} \cup \{R_i\}$.
    \EndFor
    \State $\hat{\beta} \leftarrow \text{Quantile}(\mathcal{R}, \frac{\lceil(1-\alpha)(n+1)\rceil}{n})$.
    \State \textbf{return} $\hat{\beta}$
\EndProcedure

\Statex
\Procedure{Inference}{$x_{n+1}$, model, $h, \hat{\beta}$}
    \State Define $\hat{z}_{n+1}$ and compute $\Delta_{n+1}$ (as in Alg. \ref{alg:monotonicity_verification}).
    \State $S_{\hat{\beta}}(x_{n+1}) \leftarrow P(\hat{z}_{n+1}, \Delta_{n+1}, \hat{\beta})$.
    \State \textbf{return} $S_{\hat{\beta}}(x_{n+1})$
\EndProcedure
\end{algorithmic}
\end{algorithm}

Algorithm \ref{alg:COMPASS_symmetric} describes the symmetric formulation. During calibration, we compute a non-conformity score $R_i$ for each sample using a binary search (Algorithm \ref{alg:sym_binary_search}). $R_i$ represents the minimal perturbation magnitude $\beta$ such that the interval $P(\hat{z}_i, \Delta_i, \beta)$ covers the true label $y_i$. We then compute the $(1-\alpha)$ quantile of these scores, $\hat{\beta}$. During inference, we apply this calibrated $\hat{\beta}$ to the test sample to produce the final prediction interval.

\subsection{Asymmetric COMPASS}

\begin{algorithm}[H]
\caption{Asymmetric COMPASS (Calibration \& Inference)}
\label{alg:COMPASS_asymmetric}
\begin{algorithmic}[1]
\Require Calibration $D_{\mathrm{cal}}$, Test $x_{n+1}$, $\alpha_{\mathrm{lo}}, \alpha_{\mathrm{hi}}$, Model $(f,g,V_L)$.
\Ensure Prediction interval $S(x_{n+1})$.

\Statex
\Procedure{Calibration}{$D_{\mathrm{cal}}$, model, $h, \alpha_{\mathrm{lo}}, \alpha_{\mathrm{hi}}$}
    \State \textbf{Initialize} $\mathcal{R}_{\mathrm{lo}} \leftarrow \emptyset, \mathcal{R}_{\mathrm{hi}} \leftarrow \emptyset$.
    \For{$i=1, \dots, n$}
        \State Define $\hat{z}_i$ and compute $\Delta_i$ (as in Alg. \ref{alg:monotonicity_verification}).
        \State $(R_{i, \mathrm{lo}}, R_{i, \mathrm{hi}}) \leftarrow \Call{Asymmetric-Binary-Search}{\hat{z}_i, \Delta_i, y_i}$.
        \State $\mathcal{R}_{\mathrm{lo}} \leftarrow \mathcal{R}_{\mathrm{lo}} \cup \{R_{i, \mathrm{lo}}\}$.
        \State $\mathcal{R}_{\mathrm{hi}} \leftarrow \mathcal{R}_{\mathrm{hi}} \cup \{R_{i, \mathrm{hi}}\}$.
    \EndFor
    \State $\hat{\beta}_{\mathrm{lo}} \leftarrow \text{Quantile}(\mathcal{R}_{\mathrm{lo}}, \frac{\lceil(1-\alpha_{\mathrm{lo}})(n+1)\rceil}{n})$.
    \State $\hat{\beta}_{\mathrm{hi}} \leftarrow \text{Quantile}(\mathcal{R}_{\mathrm{hi}}, \frac{\lceil(1-\alpha_{\mathrm{hi}})(n+1)\rceil}{n})$.
    \State \textbf{return} $\hat{\beta}_{\mathrm{lo}}, \hat{\beta}_{\mathrm{hi}}$
\EndProcedure

\Statex
\Procedure{Inference}{$x_{n+1}$, model, $h, \hat{\beta}_{\mathrm{lo}}, \hat{\beta}_{\mathrm{hi}}$}
    \State Define $\hat{z}_{n+1}$ and compute $\Delta_{n+1}$ (as in Alg. \ref{alg:monotonicity_verification}).
    \State $S(x_{n+1}) \leftarrow [h(g(\hat{z}_{n+1} - \hat{\beta}_{\mathrm{lo}}\Delta_{n+1})), h(g(\hat{z}_{n+1} + \hat{\beta}_{\mathrm{hi}}\Delta_{n+1}))]$.
    \State \textbf{return} $S(x_{n+1})$
\EndProcedure
\end{algorithmic}
\end{algorithm}

Algorithm \ref{alg:COMPASS_asymmetric} extends the framework to the asymmetric case. We independently calibrate a lower bound ($\hat{\beta}_{\mathrm{lo}}$) and an upper bound ($\hat{\beta}_{\mathrm{hi}}$) using separate non-conformity scores. This allows the interval to expand differently in the positive and negative directions, which is efficient for metrics with asymmetric sensitivity.

\subsection{Score-Finding Procedures}

\begin{algorithm}[H]
\caption{Symmetric Binary Search}
\label{alg:sym_binary_search}
\begin{algorithmic}[1]
\Function{Symmetric-Binary-Search}{$\hat{z}, \Delta, y$}
    \State $b_{\text{low}} \leftarrow 0, b_{\text{high}} \leftarrow \beta_{\text{range}}$.
    \For{$k=1, \dots, k_{\text{max}}$}
        \State $b_{\text{mid}} \leftarrow (b_{\text{low}} + b_{\text{high}}) / 2$.
        \State $S_{\text{mid}} \leftarrow P(\hat{z}, \Delta, b_{\text{mid}})$ \Comment{Compute endpoint interval}
        \If{$y \in S_{\text{mid}}$} $b_{\text{high}} \leftarrow b_{\text{mid}}$ \Else{} $b_{\text{low}} \leftarrow b_{\text{mid}}$ \EndIf
    \EndFor
    \State \textbf{return} $b_{\text{high}}$.
\EndFunction
\end{algorithmic}
\end{algorithm}

\begin{algorithm}[H]
\caption{Asymmetric Binary Search}
\label{alg:asym_binary_search}
\begin{algorithmic}[1]
\Function{Asymmetric-Binary-Search}{$\hat{z}, \Delta, y$}
    \State $m_{\text{lo}}(b) \leftarrow h(g(\hat{z} - b\Delta))$, $m_{\text{hi}}(b) \leftarrow h(g(\hat{z} + b\Delta))$
    
    \State \textit{Find Upper Bound Score ($R_{\mathrm{hi}}$):}
    \State $b_{\text{low}} \leftarrow 0, b_{\text{high}} \leftarrow \beta_{\text{range}}$.
    \For{$k=1, \dots, k_{\text{max}}$}
        \State $b_{\text{mid}} \leftarrow (b_{\text{low}} + b_{\text{high}}) / 2$.
        \If{$y \le m_{\text{hi}}(b_{\text{mid}})$} $b_{\text{high}} \leftarrow b_{\text{mid}}$ \Else{} $b_{\text{low}} \leftarrow b_{\text{mid}}$ \EndIf
    \EndFor
    \State $R_{\mathrm{hi}} \leftarrow b_{\text{high}}$.
    
    \State \textit{Find Lower Bound Score ($R_{\mathrm{lo}}$):}
    \State $b_{\text{low}} \leftarrow 0, b_{\text{high}} \leftarrow \beta_{\text{range}}$.
    \For{$k=1, \dots, k_{\text{max}}$}
        \State $b_{\text{mid}} \leftarrow (b_{\text{low}} + b_{\text{high}}) / 2$.
        \If{$y \ge m_{\text{lo}}(b_{\text{mid}})$} $b_{\text{high}} \leftarrow b_{\text{mid}}$ \Else{} $b_{\text{low}} \leftarrow b_{\text{mid}}$ \EndIf
    \EndFor
    \State $R_{\mathrm{lo}} \leftarrow b_{\text{high}}$.
    
    \State \textbf{return} $(R_{\mathrm{lo}}, R_{\mathrm{hi}})$.
\EndFunction
\end{algorithmic}
\end{algorithm}

These are the root-finding functions necessary to determine the non-conformity scores. 
We use a binary search for computational efficiency. 
This requires $\mathcal{O}(\log N)$ forward passes, as opposed to $\mathcal{O}(N)$ for a linear search, making it highly scalable.

\newpage
\section{Additional Tables}
\label{app:additionaltables}
\begin{table}[H]
\centering
\caption{\textbf{Empirical coverage for U-Net. } All methods reach (close to) target coverage.}
\label{tab:unet_coverages}
\resizebox{\textwidth}{!}{%
\begin{tabular}{@{}cccccccc@{}}
\cmidrule(l){3-8}
 &  & \multicolumn{6}{c}{Coverage (Mean±STD)} \\ \cmidrule(l){3-8}
Dataset & $\alpha$ & COMPASS-J & COMPASS-L & E2E-CQR & Local & Output-CQR & SCP \\ \toprule
H\&E & 0.05 & 0.952±0.023 & 0.953±0.02 & 0.955±0.025 & 0.95±0.021 & 0.951±0.019 & 0.956±0.018 \\
H\&E & 0.10 & 0.907±0.029 & 0.907±0.03 & 0.903±0.028 & 0.903±0.03 & 0.903±0.031 & 0.91±0.03 \\
H\&E & 0.15 & 0.856±0.03 & 0.856±0.031 & 0.856±0.035 & 0.848±0.036 & 0.85±0.032 & 0.849±0.032 \\ [2mm]
Skin Lesion & 0.05 & 0.951±0.009 & 0.953±0.008 & 0.951±0.01 & 0.95±0.01 & 0.951±0.009 & 0.951±0.01 \\
Skin Lesion & 0.10 & 0.9±0.014 & 0.904±0.013 & 0.9±0.012 & 0.899±0.012 & 0.897±0.013 & 0.899±0.013 \\
Skin Lesion & 0.15 & 0.853±0.015 & 0.86±0.016 & 0.853±0.016 & 0.852±0.015 & 0.852±0.015 & 0.853±0.015 \\ [2mm]
Nodule & 0.05 & 0.949±0.018 & 0.953±0.018 & 0.958±0.014 & 0.948±0.017 & 0.954±0.016 & 0.953±0.019 \\
Nodule & 0.10 & 0.901±0.025 & 0.904±0.026 & 0.904±0.022 & 0.906±0.025 & 0.908±0.024 & 0.906±0.024 \\
Nodule & 0.15 & 0.852±0.032 & 0.858±0.031 & 0.849±0.03 & 0.85±0.028 & 0.845±0.026 & 0.857±0.027 \\ [2mm]
PolyP & 0.05 & 0.955±0.02 & 0.96±0.018 & 0.957±0.019 & 0.951±0.022 & 0.955±0.018 & 0.96±0.022 \\
PolyP & 0.10 & 0.899±0.034 & 0.908±0.032 & 0.902±0.027 & 0.897±0.032 & 0.901±0.029 & 0.907±0.03 \\
PolyP & 0.15 & 0.849±0.035 & 0.865±0.03 & 0.853±0.034 & 0.854±0.037 & 0.856±0.033 & 0.856±0.032 \\ \bottomrule
\end{tabular}%
}
\end{table}

\begin{table}[H]
\centering
\caption{\textbf{Comparison between COMPASS and FCP interval lengths. } Because FCP fails to reliably converge when finding a non-conformity score, we instead provide a comparison, where we empirically find the minimal radius of an $\mathcal{L}_p$ ball that achieves the target coverage across the test set.
COMPASS methods consistently achieve the most efficient interval lengths.}
\label{tab:fcpcompass}
\begin{tabular}{@{}ccccc@{}}
\cmidrule(l){3-5}
 &  & \multicolumn{3}{c}{\textbf{Interval Size (Pixels$^2$, Mean±STD)}} \\ \cmidrule(l){3-5} 
Dataset & $\alpha$ & COMPASS-J & COMPASS-L & FCP \\ \cmidrule(l){1-5}
H\&E & 0.05 & 4637±630 & \textbf{4408±432} & 9702±1535 \\
H\&E & 0.10 & 3160±336 & \textbf{3139±375} & 6463±1193 \\
H\&E & 0.15 & \textbf{2320±252} & 2354±146 & 4979±647 \\ [2mm]
Skin Lesion & 0.05 & \textbf{1657±80} & 1689±83 & 14838±10 \\
Skin Lesion & 0.10 & \textbf{1179±53} & 1208±58 & 12445±51 \\
Skin Lesion & 0.15 & \textbf{934±30} & 956±33 & 8267±79 \\ [2mm]
Nodule & 0.05 & \textbf{3257±210} & 3394±280 & 16247±2 \\
Nodule & 0.10 & \textbf{2444±174} & 2510±180 & 16143±3 \\
Nodule & 0.15 & \textbf{2016±143} & 2082±142 & 16056±4 \\ [2mm]
PolyP & 0.05 & \textbf{5489±575} & 6376±769 & 15937±9 \\
PolyP & 0.10 & \textbf{4056±293} & 4397±469 & 15644±10 \\
PolyP & 0.15 & \textbf{3394±290} & 3686±361 & 15353±14 \\ \bottomrule
\end{tabular}
\end{table}

\begin{table}[H]
\centering
\caption{\textbf{Comparison of layer choice. } For $\alpha=0.1$, we find that while for some datasets the efficiency gains are comparable, penultimate features generally achieve shorter interval lengths.}
\label{tab:compare_layers}
\resizebox{\columnwidth}{!}{%
\begin{tabular}{@{}ccccc@{}}
\cmidrule(l){2-5}
 & \multicolumn{4}{c}{\textbf{Interval Size (Pixels$^2$,Mean±STD)}} \\ \midrule
Dataset & \multicolumn{1}{l}{COMPASS-J (Bottleneck)} & COMPASS-J (Deep) & COMPASS-J (Shallow) & COMPASS-L \\ \toprule
H\&E & 9394±799 & 3160±336 & 3140±386 & \textbf{3139±375} \\
Skin Lesion & 4084±154 & \textbf{1179±53} & 1210±51 & 1208±58 \\
Nodule & 8070±494 & \textbf{2444±174} & 2500±187 & 2510±180 \\
PolyP & Non-monotonic & \textbf{4056±293} & 4222±390 & 4397±469 \\ \bottomrule
\end{tabular} %
}
\end{table}

\begin{table}[H]
\centering
\caption{\textbf{Comparison between baseline and COMPASS methods for Weighted CP. } For $\alpha=0.1$, weighted COMPASS methods consistently outperform baseline methods in terms of interval length and restoring coverage under covariate shift.}
\label{tab:wcp}
\resizebox{\columnwidth}{!}{%
\begin{tabular}{@{}ccccc@{}}
\toprule
\textbf{Dataset} & \textbf{Weighting} & \textbf{Method} & \textbf{Interval Size (Mean±STD, Pixels$^2$)} & \textbf{Coverage (Mean±STD)} \\ \toprule
\multirow{16}{*}{H\&E} & \multirow{6}{*}{Class} & COMPASS-J & 2055±220 & 0.911±0.024 \\
 &  & COMPASS-L & \textbf{2065±263} & 0.903±0.024 \\
 &  & E2E-CQR & 2444±278 & 0.900±0.027 \\
 &  & Local & 3113±201 & 0.907±0.024 \\
 &  & Output-CQR & 2890±265 & 0.901±0.024 \\
 &  & SCP & 2371±290 & 0.900±0.026 \\ [2mm]
 & \multirow{3}{*}{Feature} & COMPASS-J & \textbf{1916±177} & 0.909±0.024 \\
 &  & COMPASS-L & 1944±244 & 0.896±0.024 \\
 &  & E2E-CQR & 2197±288 & 0.884±0.029 \\ [2mm]
 & Jacobian & COMPASS-J & \textbf{1822±200} & 0.896±0.031 \\ [2mm]
 & \multirow{6}{*}{Unweighted} & COMPASS-J & 1810±127 & 0.886±0.025 \\
 &  & COMPASS-L & 1759±159 & 0.884±0.024 \\
 &  & E2E-CQR & 1993±196 & 0.871±0.026 \\
 &  & Local & 2833±158 & 0.88±0.027 \\
 &  & Output-CQR & 2563±134 & 0.879±0.024 \\
 &  & SCP & 2003±169 & 0.878±0.024 \\
 \toprule
\multirow{16}{*}{Skin Lesion} & \multirow{6}{*}{Class} & COMPASS-J & \textbf{1318±132} & 0.896±0.019 \\
 &  & COMPASS-L & 1373±136 & 0.895±0.019 \\
 &  & E2E-CQR & 1335±277 & 0.887±0.021 \\
 &  & Local & 2426±354 & 0.893±0.02 \\
 &  & Output-CQR & 3380±43 & 0.888±0.018 \\
 &  & SCP & 1905±475 & 0.888±0.024 \\ [2mm]
 & \multirow{3}{*}{Feature} & COMPASS-J & \textbf{1573±158} & 0.919±0.014 \\
 &  & COMPASS-L & 1669±168 & 0.919±0.013 \\
 &  & E2E-CQR & 2030±269 & 0.919±0.012 \\ [2mm]
 & Jacobian & COMPASS-J & \textbf{1485±126} & 0.914±0.013 \\ [2mm]
 & \multirow{6}{*}{Unweighted} & COMPASS-J & \textbf{2183±132} & 0.94±0.005 \\ 
 &  & COMPASS-L & 2211±150 & 0.939±0.007 \\
 &  & E2E-CQR & 3400±226 & 0.945±0.006 \\
 &  & Local & 5446±380 & 0.947±0.006 \\
 &  & Output-CQR & 4201±145 & 0.945±0.008 \\
 &  & SCP & 4707±323 & 0.948±0.006 \\ \bottomrule 
\end{tabular}%
}
\end{table}

\begin{table}[H]
\centering
\caption{\textbf{COMPASS produces the most efficient interval lengths on the majority of datasets across target coverages for SegResNet.}}
\label{tab:segresnet_results}
\resizebox{\columnwidth}{!}{%
\begin{tabular}{@{}cccccccc@{}}
\cmidrule(l){3-8}
 &  & \multicolumn{6}{c}{\textbf{Interval Size (Pixels$^2$, Mean±STD)}} \\ \cmidrule(l){3-8}
Dataset & $\alpha$ & COMPASS-J & COMPASS-L & E2E-CQR & Local & Output-CQR & SCP \\ \toprule
\multirow{3}{*}{H\&E} & 0.05 & 6129±1147 & \textbf{4680±595} & 5115±541 & 7649±821 & 6860±571 & 6021±858 \\
 & 0.10 & 3283±262 & \textbf{3217±247} & 3575±344 & 4800±700 & 4509±389 & 3911±371 \\
 & 0.15 & 2483±226 & \textbf{2400±244} & 2563±182 & 3491±333 & 3358±285 & 2783±317 \\ [2mm]
\multirow{3}{*}{Skin Lesion} & 0.05 & \textbf{1599±97} & 1829±191 & 2281±146 & 4034±256 & 10942±53 & 2987±227 \\
 & 0.10 & \textbf{1108±45} & 1146±45 & 1236±66 & 2311±167 & 3056±45 & 1685±135 \\
 & 0.15 & 873±27 & 927±27 & \textbf{857±52} & 1710±64 & 2293±32 & 1083±49 \\ [2mm]
\multirow{3}{*}{Nodule} & 0.05 & 3675±480 & 3577±413 & 4461±337 & \textbf{3513±212} & 6197±64 & 4068±388 \\
 & 0.10 & 2671±251 & \textbf{2645±216} & 3390±197 & 2762±129 & 3942±102 & 2826±200 \\ [2mm]
 & 0.15 & \textbf{2125±203} & 2127±203 & 2646±128 & 2355±131 & 2937±95 & 2188±167 \\ 
 \multirow{3}{*}{PolyP} & 0.05 & \textbf{5890±417} & 6068±435 & 8495±1012 & 7992±1324 & 8296±822 & 7774±930 \\
 & 0.10 & \textbf{4507±529} & 4737±486 & 4941±949 & 4837±611 & 5061±599 & 5169±559 \\
 & 0.15 & 3188±500 & 3569±572 & \textbf{2668±255} & 3594±326 & 3556±401 & 3467±389 \\ \bottomrule 
\end{tabular}%
}
\end{table}

\begin{table}[H]
\centering
\caption{\textbf{Empirical coverage for SegResNet.} All methods reach (close to) target coverage.}
\label{tab:segresnet_coverage}
\resizebox{\columnwidth}{!}{%
\begin{tabular}{@{}cccccccc@{}}
\cmidrule(l){3-8}
 &  & \multicolumn{6}{c}{\textbf{Coverage (Mean±STD)}} \\ \cmidrule(l){3-8}
Dataset & $\alpha$ & COMPASS-J & COMPASS-L & E2E-CQR & Local & Output-CQR & SCP \\ \toprule
\multirow{3}{*}{H\&E} & 0.05 & 0.956±0.019 & 0.949±0.021 & 0.954±0.021 & 0.949±0.022 & 0.949±0.023 & 0.953±0.02 \\
 & 0.10 & 0.905±0.029 & 0.9±0.027 & 0.91±0.027 & 0.901±0.028 & 0.902±0.028 & 0.906±0.027 \\
 & 0.15 & 0.858±0.033 & 0.852±0.034 & 0.856±0.031 & 0.854±0.033 & 0.854±0.033 & 0.856±0.035 \\ [2mm]
\multirow{3}{*}{Skin Lesion} & 0.05 & 0.952±0.011 & 0.95±0.01 & 0.952±0.01 & 0.95±0.011 & 0.951±0.01 & 0.952±0.01 \\
 & 0.10 & 0.903±0.014 & 0.901±0.013 & 0.9±0.014 & 0.902±0.016 & 0.901±0.013 & 0.902±0.015 \\
 & 0.15 & 0.854±0.015 & 0.853±0.016 & 0.851±0.017 & 0.852±0.015 & 0.854±0.014 & 0.854±0.015 \\ [2mm]
\multirow{3}{*}{Nodule} & 0.05 & 0.958±0.017 & 0.953±0.017 & 0.954±0.019 & 0.951±0.019 & 0.953±0.017 & 0.952±0.018 \\ 
 & 0.10 & 0.902±0.026 & 0.901±0.027 & 0.909±0.025 & 0.9±0.024 & 0.901±0.027 & 0.904±0.026 \\
 & 0.15 & 0.852±0.026 & 0.85±0.026 & 0.856±0.029 & 0.848±0.032 & 0.849±0.032 & 0.851±0.03 \\ [2mm]
\multirow{3}{*}{PolyP} & 0.05 & 0.954±0.022 & 0.948±0.022 & 0.954±0.022 & 0.949±0.023 & 0.95±0.021 & 0.952±0.02 \\
 & 0.10 & 0.905±0.032 & 0.899±0.033 & 0.909±0.025 & 0.904±0.032 & 0.9±0.032 & 0.908±0.03 \\
 & 0.15 & 0.858±0.035 & 0.85±0.035 & 0.856±0.031 & 0.856±0.031 & 0.855±0.037 & 0.859±0.033 \\ \bottomrule
\end{tabular}%
}
\end{table}

\section{Experimental Details}
\label{app:expdetails}
\paragraph{Architectures. }We use the standard U-Net, SegResNet, and SwinUNETR architecture available in the MONAI framework~\citep{cardoso2022monai}. 
For the U-Net, we used an encoder with channel sizes of (32, 64, 128, 256) and a corresponding decoder, using two residual units per block, batch normalization, and a dropout rate of 0.1.
For SegResNet, we used a residual encoder-decoder network that started with 32 initial filters and included varying numbers of blocks in its down-sampling path (1, 2, 2, 4), batch normalization, and a dropout probability of 0.1.
For SwinUNETR, we used a transformer-based model configured for an image size of (128, 128) pixels, with a feature size of 48 and multi-head attention mechanisms across four depth levels.

\paragraph{Preprocessing. } To standardize all experiments, we resize each image and segmentation mask to a standard resolution of 128x128 pixels.  For data augmentation during training, we apply a set of transformations consisting of a random crop to the specified size, followed by random horizontal and vertical flips, each with a equal probability.

\paragraph{Training details. } For each architecture, we trained both a standard model and an End-to-End Conformalized Quantile Regression (E2E-CQR) variant~\citep{lambert2024robust}, with all models being trained using an AdamW optimizer with a learning rate of 1e-4 and a batch size of 32.
The standard models were configured for binary segmentation tasks, taking a 1 (for grayscale) or 3-channel (for RGB image) as input and producing a single-channel output mask. 
Training was optimized using a Dice score loss function to maximize the overlap between the predicted and ground-truth segmentations.
Following the original E2E-CQR model~\citep{lambert2024robust}, we modify the number of output channels to 3 to produce three distinct segmentation masks, corresponding to the lower quantile, the median prediction, and the upper quantile, which together form the prediction interval. 
These models were trained using the Tversky loss~\citep{salehi2017tversky}. Thus, it allows the model to directly learn the uncertainty bounds end-to-end.

\begin{figure}[t]
    \label{fig:adversarialdist}
    \centering
    \includegraphics[width=\textwidth]{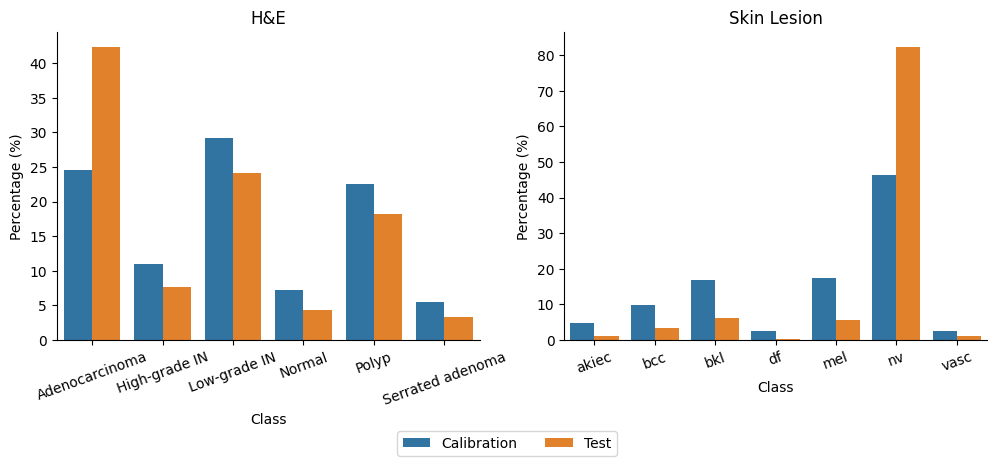}
    \caption{\textbf{Adversarial shift distribution for H\&E and Skin Lesion datasets.} For H\&E, we place 60\% and 40\% of the Adenocarcinoma (easy) samples into the calibration and test sets. For Skin Lesion, we place 70\% and 30\% of the melanocytic nevi (hard) samples into the calibration and test sets.}
    \label{fig:adversarial_shift_dist}
\end{figure}

\paragraph{Baseline Calibration Details.}
We follow the standard frameworks for SCP~\citep{papadopoulos2002inductive,lei2018distribution}, CQR~\citep{romano2019conformalized}, and Local CP~\citep{papadopoulos2008normalized,papadopoulos2011regression,lei2018distribution}.
For CQR, we use two separate Gradient Boosting Regressors to learn the lower and upper quantiles of the prediction and adjust the resulting 
We use a learning rate of 0.1, 50 estimators, a maximum depth of 3, minimum leaf samples of 1, and a minimum samples to split of 9.
For Local CP, we use two separate Random Forest Regressors to learn the mean and mean absolute difference.
We use 1000 estimators and a minimum leaf sample of 100.
For E2E-CQR, we follow the same calibration procedure as CQR.
However, in this case, the quantiles are learned by the model.

\paragraph{COMPASS Calibration and Testing Details. }
For all experiments, we run both the symmetric and asymmetric versions of COMPASS with 1 component on the logits and each layer in the segmentation head.
We report the layer with the minimum mean interval length and its corresponding coverage.
We perform 100 random calibration-test splits.
For weighted CP, we use LightGBM with the default settings~\citep{ke2017lightgbm}.
We also perform 100 calibration-test splits.
However, in this case, the samples were shuffled to maintain the target proportion of classes.
We show the covariate shift in Figure~\ref{fig:adversarial_shift_dist}.

\section{Practical Recommendation Additional Results}

\begin{figure}[H]
    \centering
    \includegraphics[width=\linewidth]{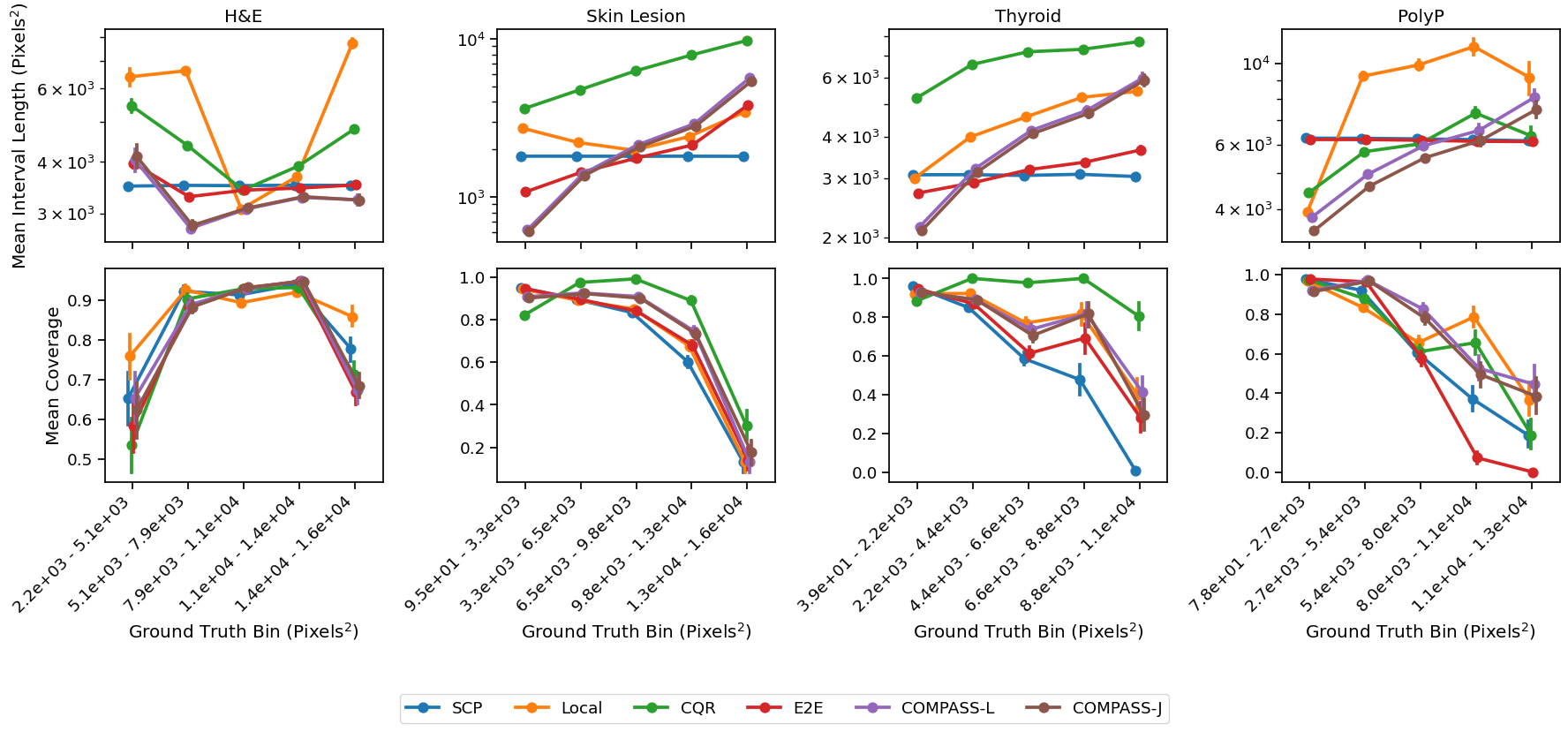}
    \caption{\textbf{COMPASS results in stable calibration. } We use the results from $\alpha$=0.1, binned the ground truth values into 5 non-overlapping bins, and computed the mean interval length and mean coverage (and their 95\% confidence intervals). We find that COMPASS methods tend to perform better for shorter interval lengths (except for H\&E), achieving approximately the same coverage. The count of the bins tends to be a better indicator of calibration stability (See Figure~\ref{fig:gtdistribution}). Please note that our guarantees are only marginal; therefore, coverage may be lower for values that are less frequently represented during calibration.}
    \label{fig:binned}
\end{figure}

\begin{figure}[H]
    \centering
    \includegraphics[width=\linewidth]{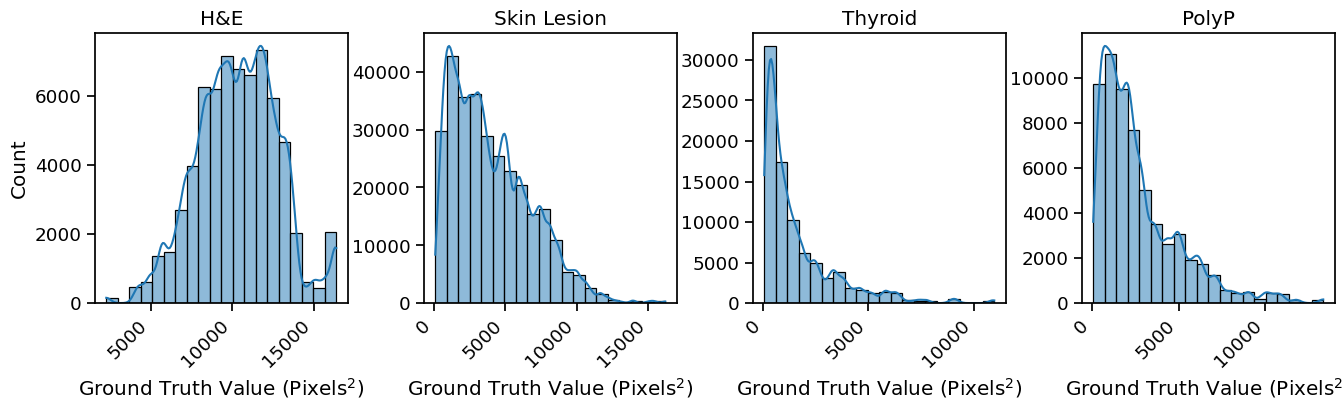}
    \caption{\textbf{Ground Truth Distribution. } We show histogram plots of the distribution of ground truth segmentation area values in Pixels$^2$ with a KDE overlaid.}
    \label{fig:gtdistribution}
\end{figure}

\begin{figure}[H]
    \centering
    \includegraphics[width=0.4\linewidth]{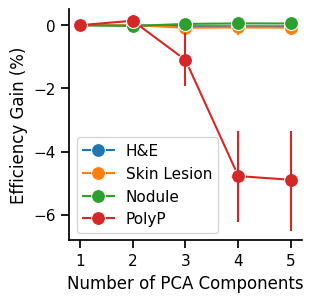}
    \caption{\textbf{One principal component is generally sufficient for COMPASS-J. } For 4 datasets, we run COMPASS-J with 1 to 5 components for 100 iterations. We find that there is no additional benefits of using more than 1 principal component, as the first principal component already explained most of the variance (Figure~\ref{fig:ev_pca}).}
    \label{fig:ncomponents}
\end{figure}

\begin{figure}[H]
    \centering
    \includegraphics[width=0.6\linewidth]{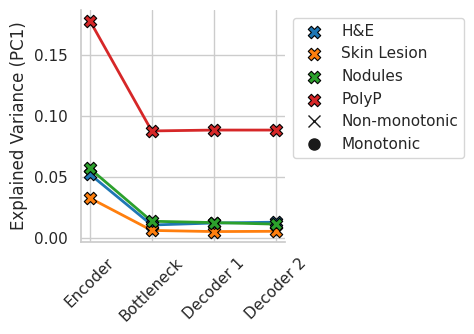}
    \caption{\textbf{Explained variance of the Jacobians is significantly lower when computed on  raw flattened features. } For our 4 datasets, we plot the first principal component's explained variance against the feature layer used for COMPASS-J. We find all layers result in non-monotonicity, indicating the translation-variant nature of the flattened space.}
    \label{fig:ev_pca_flattened}
\end{figure}

\begin{table}[H]
\centering
\caption{\textbf{Jacobian computation for COMPASS-J is fast.} We present the average computational time to compute the Jacobians of feature size 64x64x64 across four datasets below using a single NVIDIA A100 GPU. We compute the Jacobians with respect to the metric, which is extremely fast with autograd. This is because the downstream metric is one-dimensional. Furthermore, to expedite the training and calibration step, we precompute and save the Jacobians summed on the spatial dimension.}
\label{tab:jacobian_runtime}
\begin{tabular}{@{}ccccc@{}}
\toprule
 & \textbf{H\&E} & \textbf{Skin Lesion} & \textbf{Nodule} & \textbf{PolyP} \\ \midrule
\begin{tabular}[c]{@{}c@{}}Compute time for full dataset \\ (Mean±STD in seconds)\end{tabular} & 73.6±0.6 & 305.8±1.1 & 93.3±0.9 & 73.6±0.6 \\ \bottomrule
\end{tabular}
\end{table}

\begin{figure}[H]
    \centering
    \includegraphics[width=\linewidth]{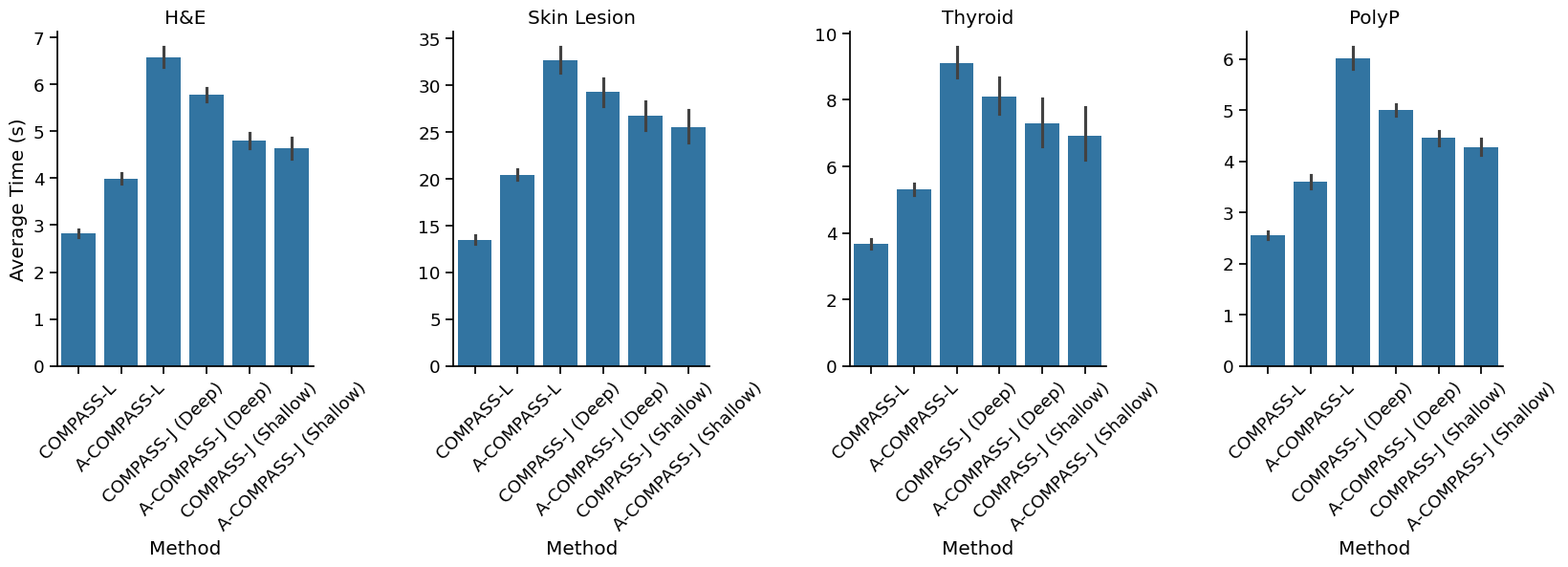}
    \caption{\textbf{COMPASS methods do not require substantial computing times. } The calibration step requires multiple forward passes. However, we find that combining forward passes and binary search enables fast calibration. We show the average computational time required for calibration (for 100 repeats) across 4 datasets using a single NVIDIA A100 GPU. We compute times for the symmetric (without A-) and asymmetric (A-), deep (-Deep) and (-Shallow) layers, and COMPASS-L versus COMPASS-J. The full calibration runtimes are on the order of seconds for the full calibration dataset. The number of samples used for calibration was 223 (H\&E), 1000 (Skin Lesion), 349 (Thyroid), and 200 (PolyP).}
    \label{fig:runtime}
\end{figure}

\begin{figure}[H]
  \centering
    \includegraphics[width=\textwidth]{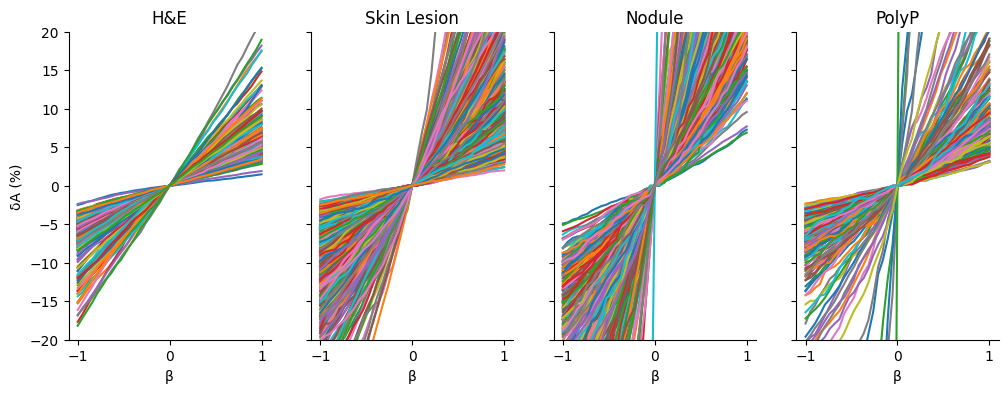}
  \caption{\textbf{Monotonicity verification. } We verify the nestedness of linear latent perturbations (Definition\ref{def:nestedness}). For each dataset, we perform a sweep across different $\beta$s and compute the change in volume $\delta A=\frac{A_{\beta}-A_0}{A_0}\times 100$ where $A_{\beta}$ is the area when the original latent is perturbed by $\beta$.}
  \label{fig:monotonic}
\end{figure}
\section{Segmentation Overlays for COMPASS-J and COMPASS-L}
\label{app:additionalfigures}
\subsection{Segmentation overlays for different percentages of change in segmentation area ($\delta A$) for COMPASS-J}
\begin{figure}[H]
  \centering
    \includegraphics[width=\textwidth]{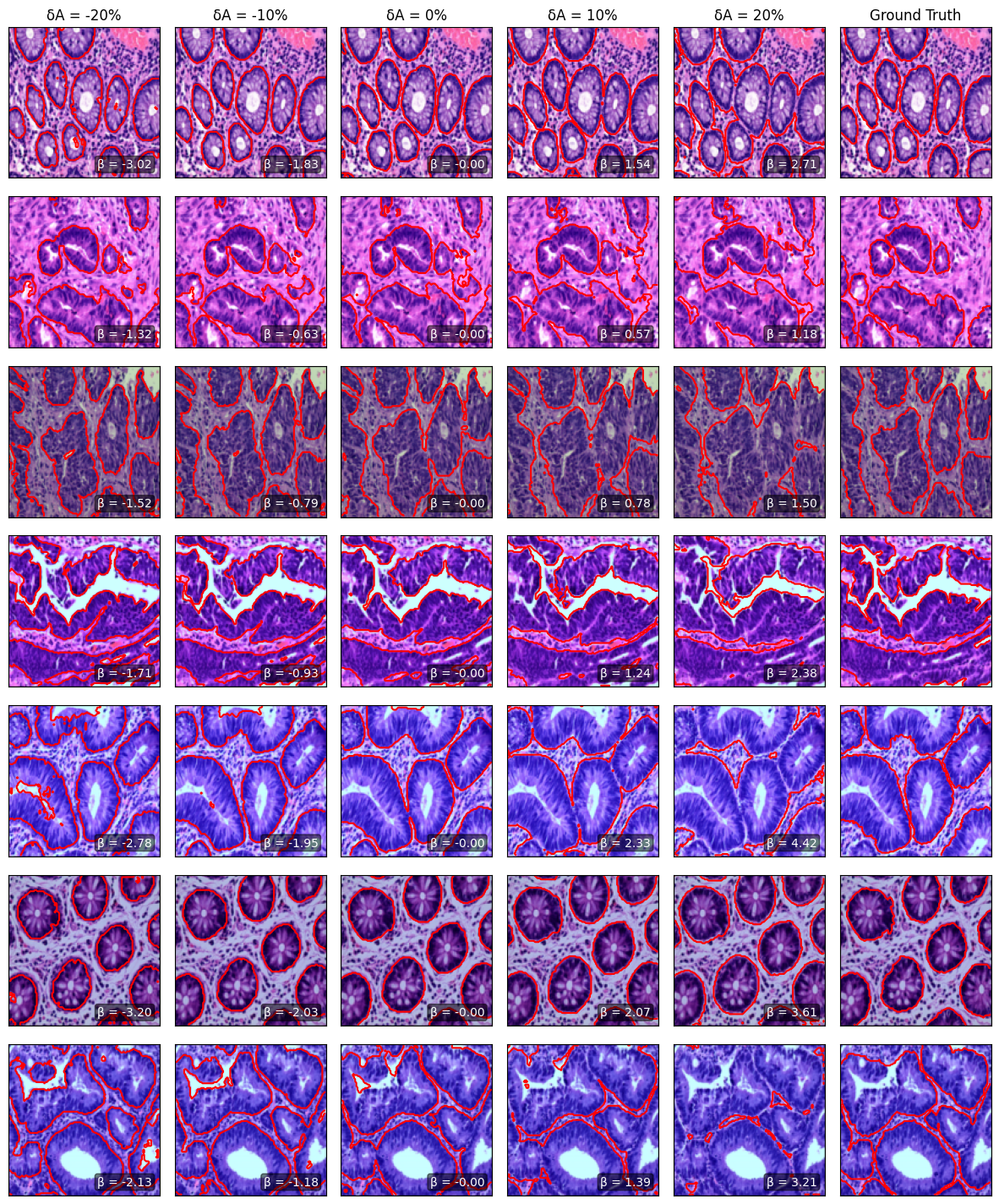}
  \caption{Segmentation area increases with $\beta$ for H\&E. Each row is a different sample. $\delta A=0\%$ is the original prediction.}\label{fig:cell_contours}
\end{figure}
\begin{figure}[H]
  \centering
    \includegraphics[width=\textwidth]{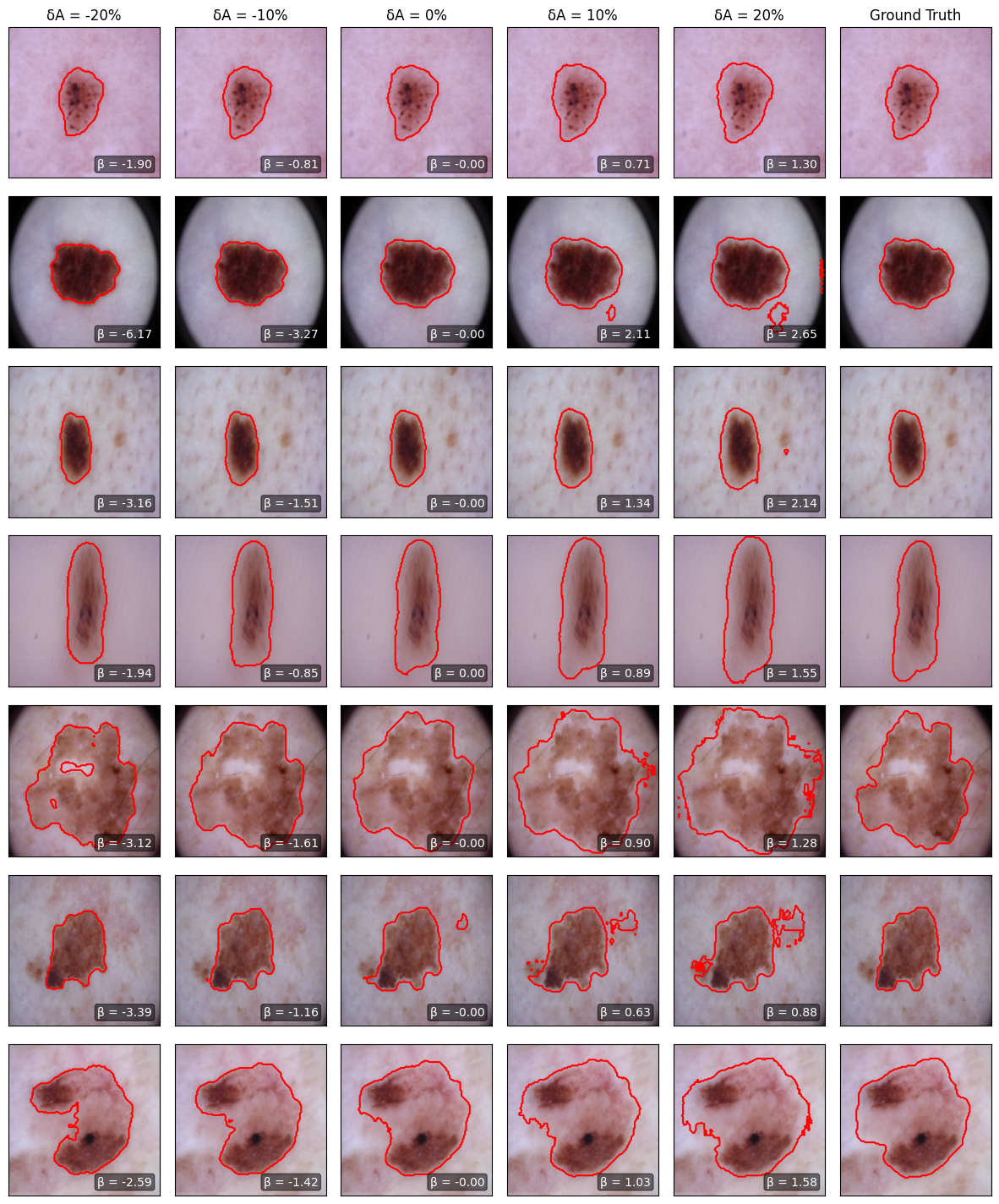}
  \caption{Segmentation area increases with $\beta$ for Skin Lesion. Each row is a different sample. $\delta A=0\%$ is the original prediction.}\label{fig:skinlesion_contours}
\end{figure}
\begin{figure}[H]
  \centering
    \includegraphics[width=\textwidth]{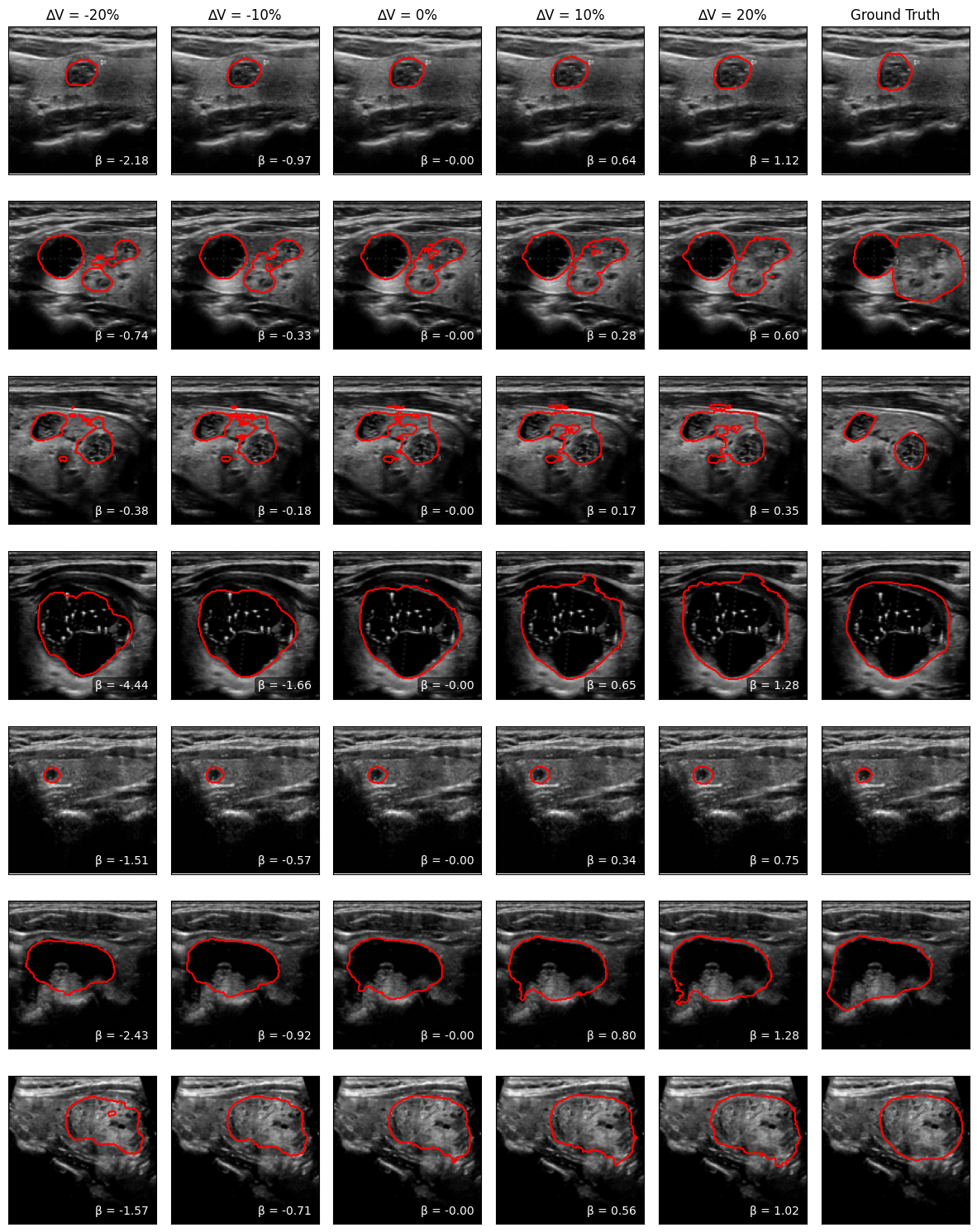}
  \caption{Segmentation area increases with $\beta$ for Thyroid Nodule. Each row is a different sample. $\delta A=0\%$ is the original prediction.}\label{fig:thyroidnodule_contours}
\end{figure}
\begin{figure}[H]
  \centering
    \includegraphics[width=\textwidth]{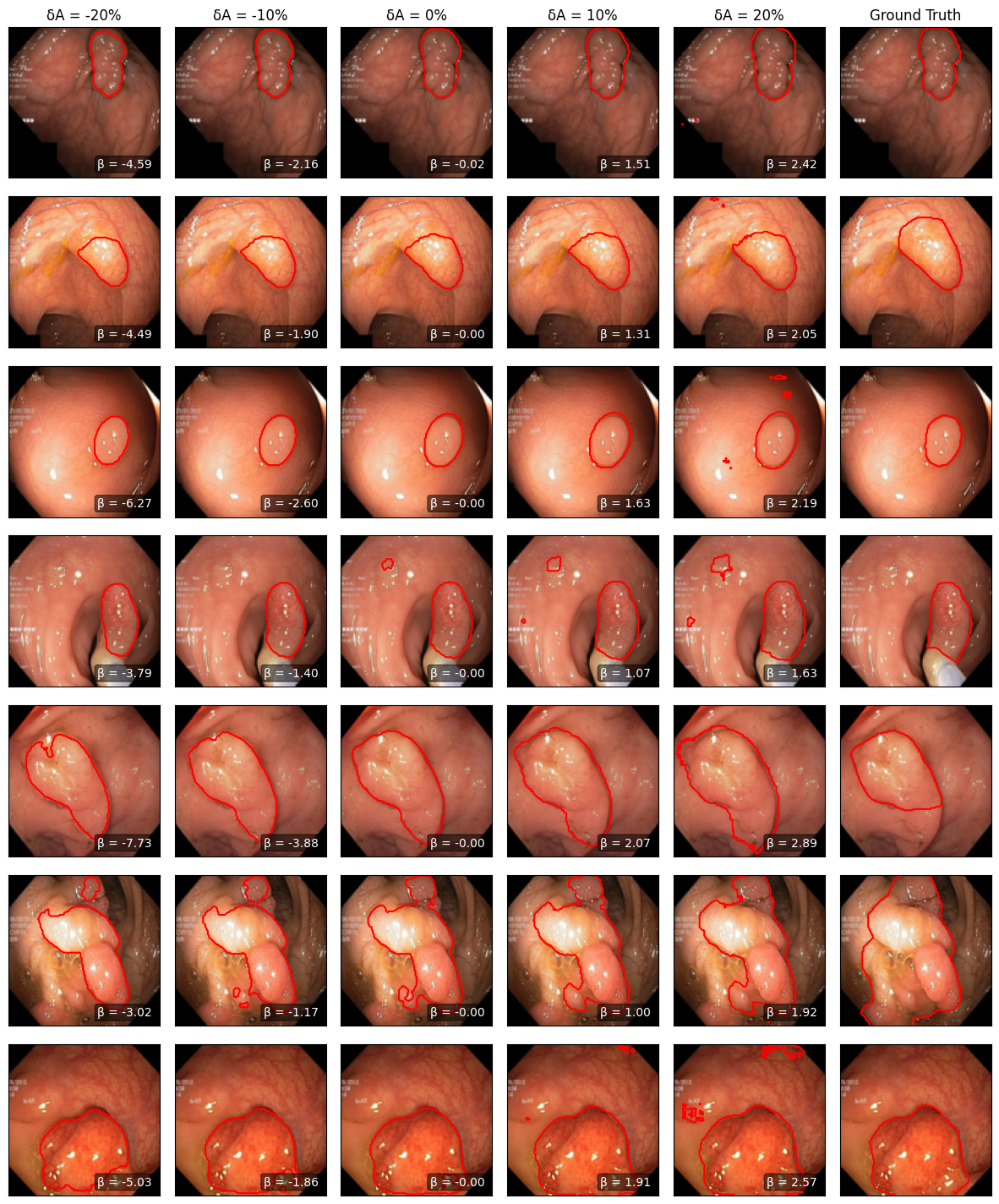}
  \caption{Segmentation area increases with $\beta$ for PolyP. Each row is a different sample. $\delta A=0\%$ is the original prediction.}\label{fig:polyp_contours}
\end{figure}

\subsection{Segmentation overlays for different percentages of change in segmentation area ($\delta A$) for COMPASS-L}
\begin{figure}[H]
  \centering
    \includegraphics[width=\textwidth]{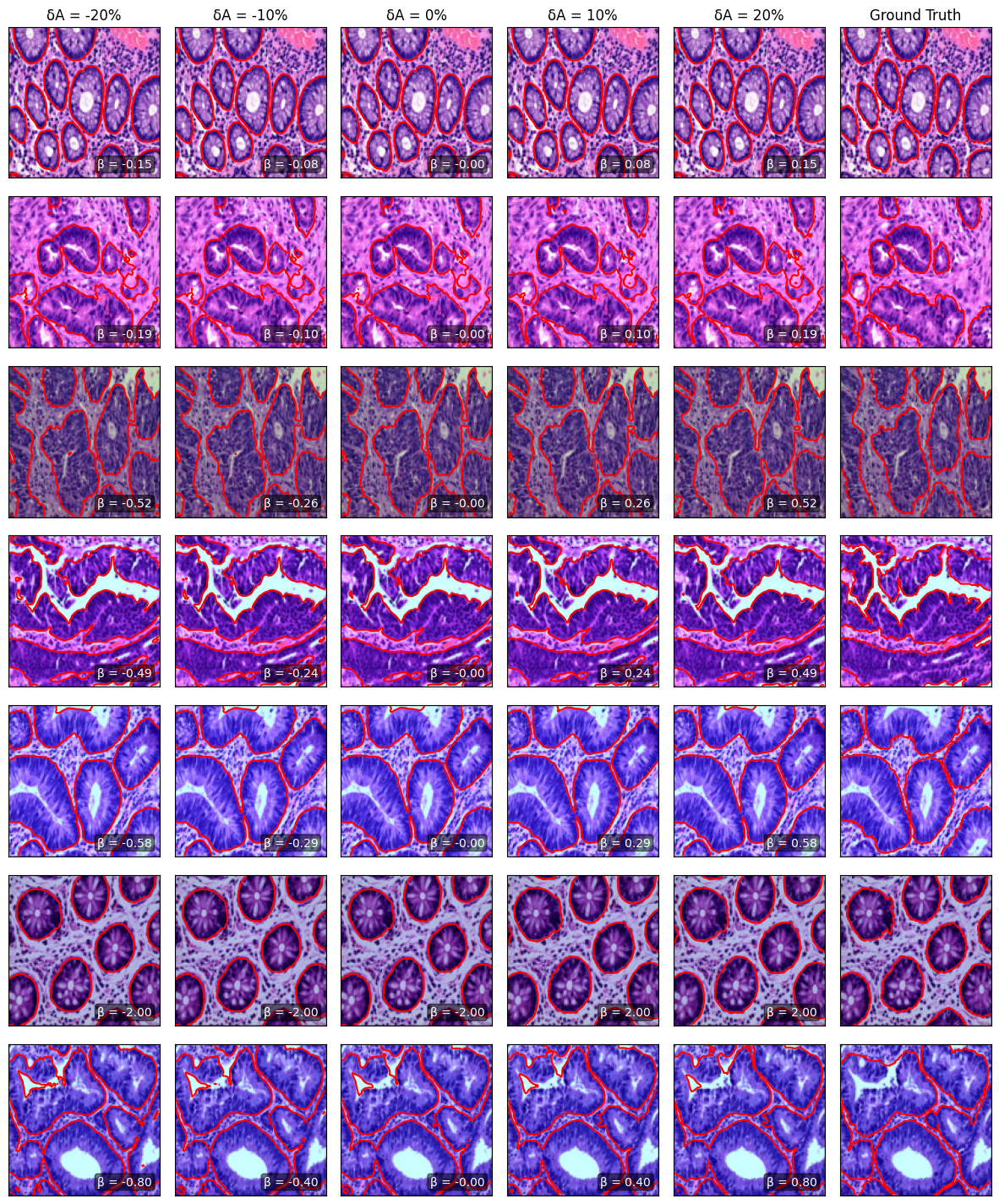}
  \caption{Segmentation area increases with $\beta$ for H\&E. Each row is a different sample. $\delta A=0\%$ is the original prediction.}\label{fig:cell_logitscontours}
\end{figure}
\begin{figure}[H]
  \centering
    \includegraphics[width=\textwidth]{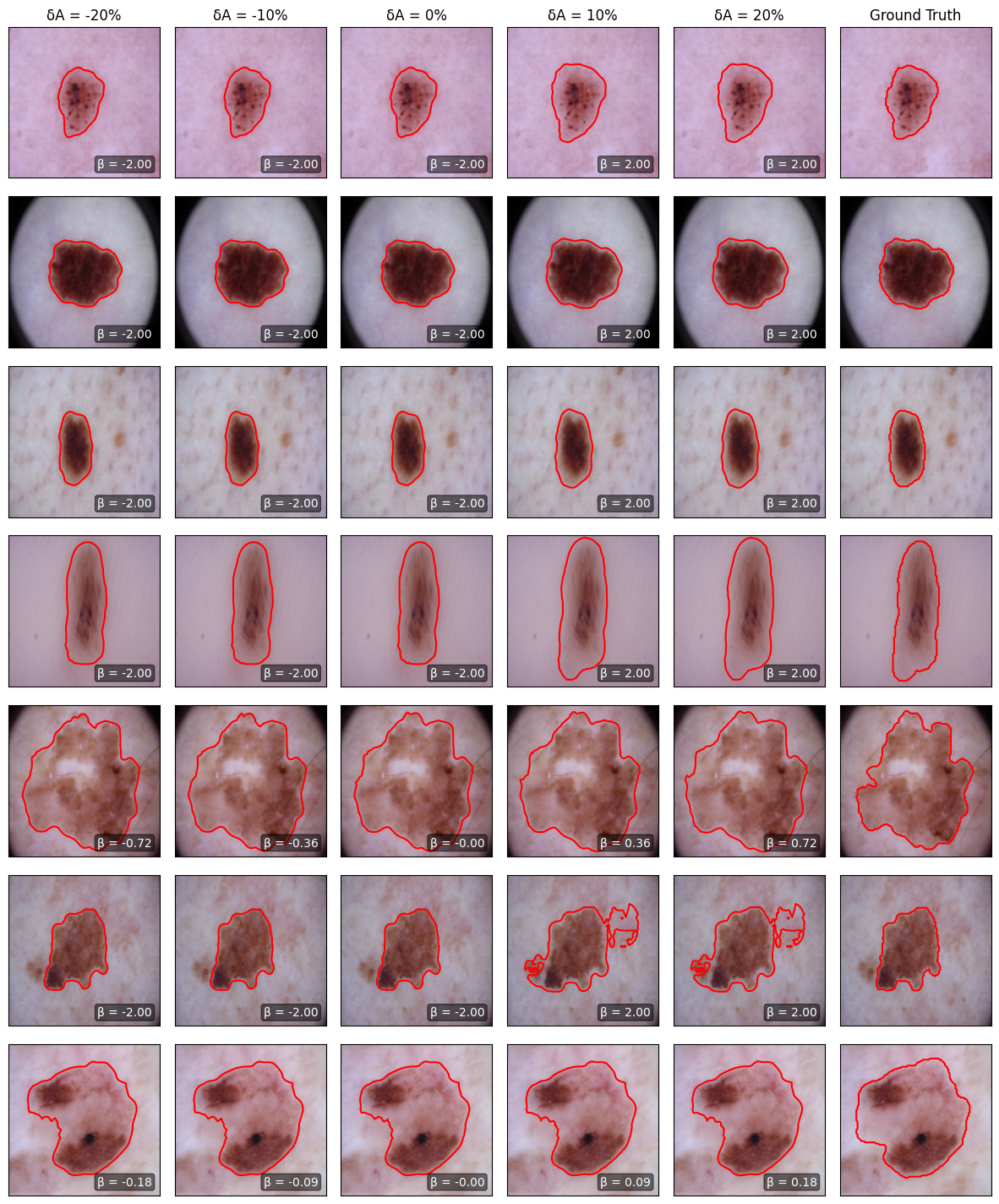}
  \caption{Segmentation area increases with $\beta$ for Skin Lesion. Each row is a different sample. $\delta A=0\%$ is the original prediction.}\label{fig:skinlesion_logitscontours}
\end{figure}
\begin{figure}[H]
  \centering
    \includegraphics[width=\textwidth]{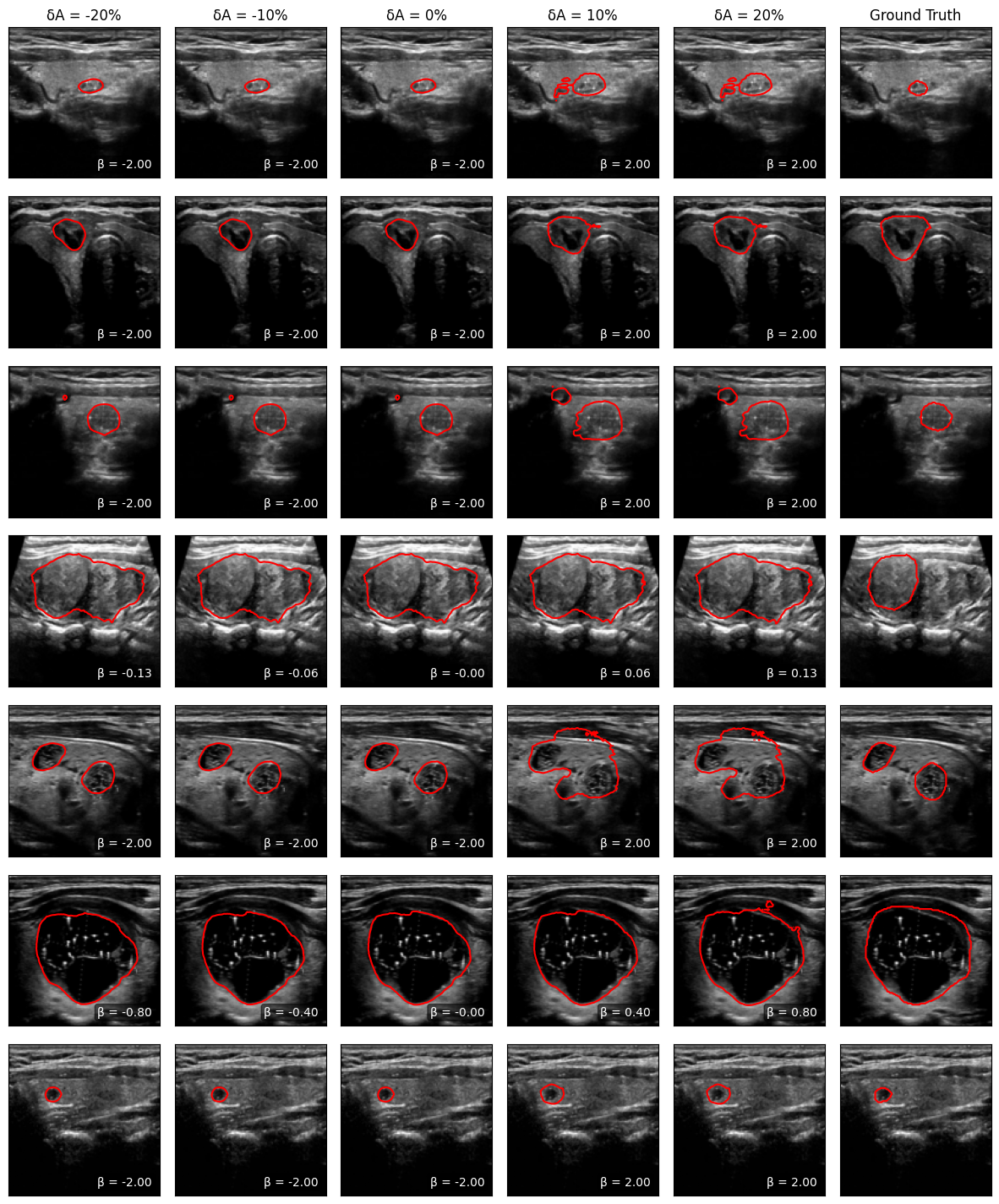}
  \caption{Segmentation area increases with $\beta$ for Thyroid Nodule. Each row is a different sample. $\delta A=0\%$ is the original prediction.}\label{fig:thyroidnodule_logitscontours}
\end{figure}
\begin{figure}[H]
  \centering
    \includegraphics[width=\textwidth]{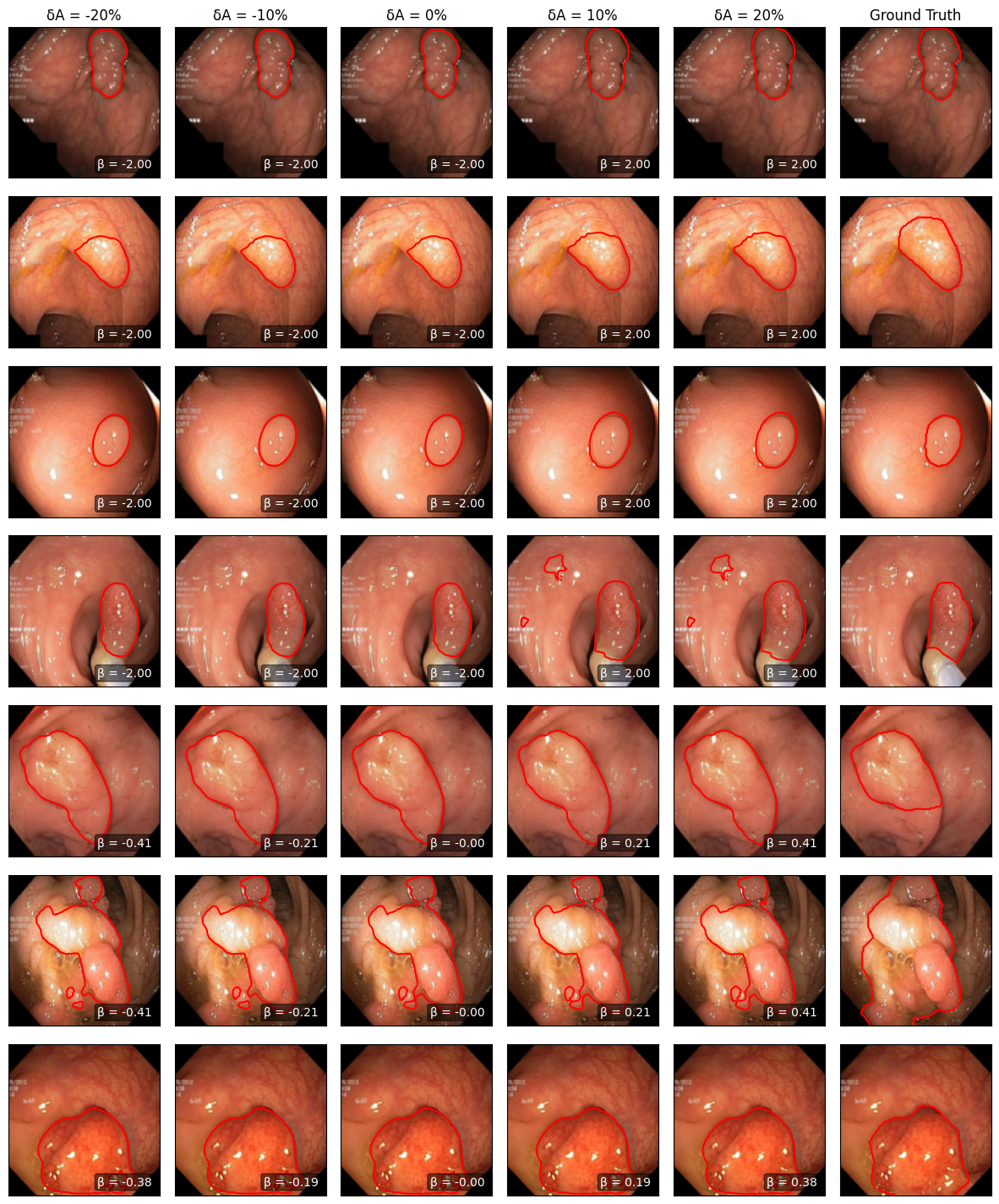}
  \caption{Segmentation area increases with $\beta$ for PolyP. Each row is a different sample. $\delta A=0\%$ is the original prediction.}\label{fig:polyp_logitscontours}
\end{figure}

\end{document}